\newtheoremstyle{definition}{}{}{}{}{\bfseries}{.}{.5em}{\thmname{#1}\thmnumber{ #2}\thmnote{ (#3)}}
\theoremstyle{definition}
\newtheorem{definition}{Definition}
\theoremstyle{plain}
\newtheorem{theorem}{Theorem}
\theoremstyle{plain}
\newtheorem{proposition}{Proposition}
\theoremstyle{plain}
\newtheorem{lemma}{Lemma}
\theoremstyle{plain}
\newtheorem{corollary}{Corollary}
\theoremstyle{definition}
\newtheorem{assumption}{Assumption}
\theoremstyle{remark}
\newtheorem{remark}{Remark}
\Crefname{equation}{Equation}{Eqs.}
\DeclareMathOperator*{\argmin}{arg\,min}
\title{Unmatched Control Barrier Functions: \\ Certainty Equivalence Adaptive Safety}
\author{
 Brett T. Lopez \\
  Dept. of Mechanical and Aerospace Engineering\\
  University of California, Los Angeles\\
  Los Angeles, CA 90095 \\
  \texttt{btlopez@ucla.edu} \\
   \And
 Jean-Jacques Slotine \\
  Dept. of Mechanical Engineering\\
  Massachusetts Institute of Technology\\
  Cambridge, MA 02139 \\
  \texttt{jjs@mit.edu} \\
}
\begin{document}
\maketitle
\begin{abstract}
This work applies universal adaptive control to control barrier functions to achieve forward invariance of a safe set despite the presence of unmatched parametric uncertainties.
The approach combines two ideas.
The first is to construct a family of control barrier functions that ensures the system is safe for all possible models.
The second is to use online parameter adaptation to methodically select a control barrier function and corresponding safety controller from the allowable set.
While such a combination does not necessarily yield forward invariance without additional requirements on the barrier function, we show that such invariance can be established by simply adjusting the adaptation gain online.
It is also shown that the developed method is applicable to systems with safety constraints that have a relative degree greater than one.
This work thus represents the first adaptive safety approach that successfully employs the certainty equivalence principle for general state constraints without sacrificing safety guarantees.
\end{abstract}


\section{Introduction}

Safety-critical controllers rely on precise model knowledge to ensure forward invariance of a safe set.
Although these controllers often possess some inherent robustness \cite{xu2015robustness}, techniques that guarantee safety while effectively compensating for model uncertainties with minimal conservatism have only recently been proposed.
One such framework is based on the notion of adaptive safety; a paradigm that achieves forward invariance of a safe set using results from adaptive control and model estimation theory. 
The work by \cite{taylor2020adaptive} introduced the adaptive control barrier function (aCBF) --- analogous to an adaptive control Lyapunov function in adaptive control \cite{krstic1995control} --- and showed that subsets of a safe set were forward invariant when an aCBF was used to construct controllers. 
The conservatism of an aCBF was addressed by \cite{lopez2020robust} through the so-called robust adaptive control barrier function (RaCBF).
There are two key differences between an aCBF and RaCBF.
Firstly, an RaCBF yields less conservative controllers as the system is allowed to approach the boundary of the safe set.
Secondly, an RaCBF can be combined with model estimation to further reduce conservatism if monotonic reduction in the model uncertainty can be established.
Due to its effectiveness and strong theoretical guarantees, the adaptive safety paradigm has seen several extensions by others in the controls and robotics communities (see, e.g., \cite{maghenem2021adaptive,isaly2021adaptive,black2021fixed,cohen2022high}).

A fundamental limitation of current adaptive safety approaches is the inability to employ the certainty equivalence principle when the barrier function depends on unknown model parameters.
Model parameter dependency usually arises when either 1) the system possess unmatched model uncertainties, i.e., unknown dynamics outside the span of the control input matrix, or 2) the uncertainty enters through the control input matrix for systems with control constraints. 
The certainty equivalency principle is a design philosophy that entails constructing barrier functions as if the model were known. 
Then, the uncertain parameters can simply be replaced with their online estimates.
This philosophy works seamlessly when the barrier function does not depend on unknown parameters.
Conversely, it is very difficult to establish forward invariance of a parameter-dependent barrier function as sign-indefinite terms arise in the forward invariance proof.
To cancel out the sign-indefinite terms, \cite{taylor2020adaptive,lopez2020robust} construct a barrier function for a \emph{modified} system that depends on its own (unknown) barrier function.
Generating such a barrier function is generally no easier than dealing with the problematic terms directly.

The main contribution of this work is a stable adaptive safety algorithm which employs the certainty equivalence principle to achieve set invariance through online parameter adaptation. 
Unlike previous works, one just needs to construct a family of control barrier functions for all possible models --- a much simpler procedure than that originally proposed by \cite{taylor2020adaptive,lopez2020robust}. 
It is shown that a safe set can be made forward invariant with a novel direct adaptation law that systematically adjusts the adaptation gain online; a novel technique recently developed for adaptive control with unmatched uncertainties \cite{lopez2021universal} that has also found uses in other areas of control, e.g., direct adaptive optimal control \cite{lopez2022adaptive}.
Furthermore, the direct adaptation law can be combined with model learning to improve parameter adaptation transients and reduce conservatism.
The derived adaptation laws can also be used with safety constraints more than one derivative away from the input.
As a result, this is the first work to successfully utilize the certainty equivalence principle in adaptive safety, and represents an important step towards safety-critical control of uncertain systems.

\paragraph{Notation:} The set of positive and strictly-positive scalars will be denoted as $\mathbb{R}_+$ and $\mathbb{R}_{>0}$, respectively.
The shorthand notation for a function $T$ parameterized by a vector $a$ with vector argument $s$ will be $T_a(s) \triangleq T(s;a)$.
The partial differentiation with respect to variable $x \in \mathbb{R}^n$ of function $N(x,y)$ will be $\nabla_x N(x,y) = \partial N / \partial x \in \mathbb{R}^{n}$.
The subscript for $\nabla$ will be omitted when it is clear which variable the differentiation is with respect to.
The Lie derivative of a scalar function $h:\mathbb{R}^n \rightarrow \mathbb{R}$ with respect to a vector field $f : \mathbb{R}^n \rightarrow \mathbb{R}^n$ is $L_f h \triangleq \nabla h^\top f$.

\section{Problem Formulation}
Consider the uncertain nonlinear system
\begin{equation}
\label{eq:dyn}
    \dot{x} = f(x) -\Delta(x)^\top \theta + g(x) u,
\end{equation}
with state $x\in\mathbb{R}^n$, control input $u \in \mathbb{U} \subseteq \mathbb{R}^m$, known dynamics $f:\mathbb{R}^n \rightarrow \mathbb{R}^n$, unknown parameters $\theta \in \Theta \subset \mathbb{R}^p$ with known regressor $\Delta: \mathbb{R}^n \rightarrow \mathbb{R}^{p \times n}$, and known control input matrix $g: \mathbb{R}^n \rightarrow \mathbb{R}^{n\times m}$.
In this work we derive an adaptive safety controller that ensures $x(t) \in \mathcal{C}$ for all time where $\mathcal{C}$ is a set of safe states. 
The following assumption is made on the unknown parameters $\theta$.

\begin{assumption}
\label{assumption:params}
The unknown parameters $\theta$ belong to a known closed and bounded set $\Theta \subset \mathbb{R}^p$.
\end{assumption}

An immediate consequence of Assumption~\ref{assumption:params} is that the parameter estimation error $\tilde{\theta} \triangleq \hat{\theta} - \theta$ must also belong to a known closed and bounded set, i.e., $\tilde{\theta} \in \tilde{\Theta}$.
Moreover, each element must then have a finite supremum where $\tilde{\vartheta}_i \triangleq \underset{\tilde{\theta}_i  \in  \tilde{\Theta}}{\mathrm{sup}} ~ \tilde{\theta}_i$ for $i=1,\dots,p$.
Note that $\tilde{\vartheta}$ is equivalent to the maximum possible parameter estimation error.

\section{Background: Adaptive Safety}

The following definitions are stated for completeness, a more thorough treatment can be found in \cite{ames2016control,ames2019control} and references therein.
See \cite{taylor2020adaptive,lopez2020robust} for the first works on adaptive safety.

Let the set $\mathcal{C} \subset \mathbb{R}^n$ be a 0-superlevel set of a continuously differentiable function $h: \mathbb{R}^n \rightarrow \mathbb{R}$ where

\begin{equation*}
    \begin{aligned}
    \mathcal{C} & = \left\{ x \in \mathbb{R}^n : h(x) \geq 0  \right\}, \\
    \partial \mathcal{C} & = \left\{ x \in \mathbb{R}^n : h(x) = 0  \right\}, \\
    \text{Int}\left(\mathcal{C}\right) & = \left\{ x \in \mathbb{R}^n : h(x) > 0  \right\}.
    \end{aligned}
\end{equation*}

The following definitions assume the dynamics \cref{eq:dyn} are Lipschitz (at least locally) so that there exists a unique solution $x(t)$ for $t \geq t_0$ with initial condition $x_0 \triangleq x(t_0)$.

\begin{definition}
\label{def:fi}
The set $\mathcal{C}$ is \emph{forward invariant} if for every $x_0\in \mathcal{C}$, $x(t) \in \mathcal{C}$ for all $t \geq t_0$.
\end{definition}

\begin{definition}
\label{def:safety}
A system is \emph{safe} with respect to set $\mathcal{C}$ if the set $\mathcal{C}$ is forward invariant.
\end{definition}

\begin{definition}
A continuous function $\alpha : \mathbb{R} \rightarrow \mathbb{R}$ is an \emph{extended class $\mathcal{K}_\infty$ function} if it is strictly increasing, $\alpha(0) = 0$, and is defined on the entire real line.
\end{definition}

When model uncertainty is present, as is the case in \cref{eq:dyn}, it is challenging or infeasible to derive a controller that renders a safe set forward invariant.
Conceptually, adaptive safety is a framework that uses tools from adaptive control theory to systematically compute a safe controller via online parameter adaptation.
Central to adaptive safety is the notion of model-parameterized safe sets $\mathcal{C}_{\theta}$ which are shown to be forward invariant with an aCBF \cite{taylor2020adaptive} or RaCBF \cite{lopez2020robust}.
Due to the similarities between \cite{lopez2020robust} and this work, only the core results from \cite{lopez2020robust} are summarized below.

\begin{definition}[Robust Adaptive Control Barrier Function \cite{lopez2020robust}]
\label{def:racbf}
Let $\mathcal{C}_\theta$ be a family of 0-superlevel sets parameterized by $\theta$ for a continuously differentiable function $h^r_\theta:\mathbb{R}^n\times \mathbb{R}^p\rightarrow\mathbb{R}$.
The function $h^r_\theta(x)$ is a \emph{robust adaptive control barrier function} (RaCBF) if there exists a controller $u \in \mathbb{U}$ and extended class $\mathcal{K}_\infty$ function $\alpha(\cdot)$ such that for every $\theta\in\Theta$
\begin{equation}
\label{eq:rcbfu}
    \underset{u  \in  \mathbb{U}}{\text{sup}}~ \left\{ \nabla_x h^r_\theta(x)^\top \left[ f(x) - \Delta(x)^\top \Lambda_{\theta}(x) + g(x) u\right]\right\} \geq - \alpha(h^r_\theta(x)),
\end{equation}
where $\Lambda_{\theta}(x) \triangleq \theta - \gamma \nabla_\theta h^r_{\theta}(x)$ and $\gamma$ is an admissible adaptation gain.
\end{definition}

Observe that an RaCBF is designed for a modified system that depends upon its own RaCBF via $\Lambda_\theta(x)$.
This is the main drawback of using an RaCBF (and an aCBF as the same modification is employed).
Fundamentally, the modified dynamics are a byproduct of attempting to use the certainty equivalence principle with model-parameterized safe sets. 
It can be shown that the extra term is actually related to the adaptation law derived for an RaCBF \cite{lopez2020robust}.
Therefore, an RaCBF is constructed to account for parameter adaptation transients and hence represents a departure from the certainty equivalence principle design philosophy. 
In this work we will show that a true certainty equivalence adaptive safety framework is possible when the adaptation gain is adjusted online \cite{lopez2021universal}.

The main theorem from \cite{lopez2020robust} is stated below and will serve as a useful comparison for results derived in \cref{sec:results}.

\begin{theorem}[\cite{lopez2020robust}]
\label{thm:racbf}
Let $\mathcal{C}_{\hat{\theta}}$ be a 0-superlevel set of a continuously differentiable function $h^r_{\hat{\theta}}:\mathbb{R}^n \times \mathbb{R}^p\rightarrow\mathbb{R}$. If $h^r_{\hat{\theta}}(x)$  is an RaCBF on $\mathcal{C}^r_{\hat{\theta}} \triangleq \{x \in \mathbb{R}^n,  \hat{\theta} \in \Theta : h^r_{\hat{\theta}}(x) \geq \tfrac{1}{2\gamma} \tilde{\vartheta}^\top \tilde{\vartheta}\} \subseteq \mathcal{C}_{\hat{\theta}}$ then any locally Lipschitz continuous controller satisfying
\begin{equation}
\label{eq:racbfu}
    \underset{u  \in  \mathbb{U}}{\mathrm{sup}}~ \left\{ \nabla_x h^r_{\hat{\theta}}(x)^\top \left[ f(x) - \Delta(x)^\top \Lambda_{\hat{\theta}}(x) + g(x) u\right]\right\} \geq - \alpha\left(h^r_{\hat{\theta}}(x) - \tfrac{1}{2\gamma} \tilde{\vartheta}^\top \tilde{\vartheta}\right)
\end{equation}
renders $\mathcal{C}_{\hat{\theta}}$ safe with the adaptation law
\begin{equation}
\label{eq:radapt}
    \dot{\hat{\theta}} = \gamma \Delta(x) \nabla_x h^r_{\hat{\theta}}(x) \\
\end{equation}
where $\gamma$ is an admissible adaptation gain.
\end{theorem}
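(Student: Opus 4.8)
The plan is to build a single composite (Lyapunov-like) function coupling the barrier value with the parameter estimation error, and to show its nonnegative sublevel set is forward invariant in the extended state $(x,\hat{\theta})$. Define
\[
    W(x,\hat{\theta}) \triangleq h^r_{\hat{\theta}}(x) - \tfrac{1}{2\gamma}\,\tilde{\theta}^\top\tilde{\theta}, \qquad \tilde{\theta} \triangleq \hat{\theta} - \theta .
\]
Because $\tilde{\theta}^\top\tilde{\theta} \leq \tilde{\vartheta}^\top\tilde{\vartheta}$ (the maximum-estimation-error bound implied by Assumption~\ref{assumption:params}, maintained along trajectories by a standard parameter projection in \cref{eq:radapt} if needed), one has $\mathcal{C}^r_{\hat{\theta}} \subseteq \{W \geq 0\} \subseteq \{h^r_{\hat{\theta}} \geq 0\} = \mathcal{C}_{\hat{\theta}}$, so it suffices to prove forward invariance of $\{W\geq0\}$ along the closed loop.

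First I would differentiate $W$ along \cref{eq:dyn} driven by any controller satisfying \cref{eq:racbfu} together with the adaptation law \cref{eq:radapt}, using $\dot{\theta}=0$ so that $\dot{\tilde{\theta}}=\dot{\hat{\theta}}$, which gives
\[
    \dot W = \nabla_x h^r_{\hat{\theta}}(x)^\top\!\left[f(x)-\Delta(x)^\top\theta+g(x)u\right] + \nabla_\theta h^r_{\hat{\theta}}(x)^\top\dot{\hat{\theta}} - \tfrac{1}{\gamma}\tilde{\theta}^\top\dot{\hat{\theta}} .
\]
Substituting \cref{eq:radapt} into the last two terms and collecting everything proportional to $\Delta(x)\nabla_x h^r_{\hat{\theta}}(x)$, the explicit $\theta$ cancels and precisely the modified drift reappears, i.e. $\dot W = \nabla_x h^r_{\hat{\theta}}(x)^\top[f(x)-\Delta(x)^\top\Lambda_{\hat{\theta}}(x)+g(x)u]$ with $\Lambda_{\hat{\theta}} = \hat{\theta}-\gamma\nabla_\theta h^r_{\hat{\theta}}$ — exactly the term built into Definition~\ref{def:racbf}. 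Invoking \cref{eq:racbfu} then yields $\dot W \geq -\alpha\big(h^r_{\hat{\theta}}(x)-\tfrac{1}{2\gamma}\tilde{\vartheta}^\top\tilde{\vartheta}\big)$, and since $h^r_{\hat{\theta}}(x)-\tfrac{1}{2\gamma}\tilde{\vartheta}^\top\tilde{\vartheta}\leq W$ while $\alpha$ is strictly increasing, $\dot W \geq -\alpha(W)$.

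Finally I would close with a comparison/Nagumo argument: on $\{W=0\}$ we have $\dot W \geq -\alpha(0)=0$, and comparing with the scalar flow $\dot y=-\alpha(y)$, $y(t_0)=W(x_0,\hat{\theta}_0)\geq0$ (for which $y\equiv0$ is invariant from below), gives $W(x(t),\hat{\theta}(t))\geq y(t)\geq0$ for all $t\geq t_0$ for which the solution exists; completeness follows since $\{W\geq0\}$ confines the trajectory to the region where the RaCBF hypotheses hold and the closed-loop field is locally Lipschitz. Together with the inclusions above this gives $x(t)\in\mathcal{C}_{\hat{\theta}}$ for all $t\geq t_0$, i.e. safety. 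I expect the main obstacle to be the cancellation bookkeeping — checking that $\nabla_\theta h^r_{\hat{\theta}}(x)^\top\dot{\hat{\theta}}$ and $-\tfrac{1}{\gamma}\tilde{\theta}^\top\dot{\hat{\theta}}$ combine under \cref{eq:radapt} to reproduce exactly the $\Lambda_{\hat{\theta}}$ drift — together with the care needed to ensure $\tilde{\theta}^\top\tilde{\theta}\leq\tilde{\vartheta}^\top\tilde{\vartheta}$ holds along trajectories so that the $\tilde{\vartheta}$-tightened RaCBF inequality can be traded for the clean bound $\dot W\geq-\alpha(W)$.
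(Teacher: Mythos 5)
The paper states \cref{thm:racbf} without proof (it is imported from \cite{lopez2020robust}), but your argument is correct and is essentially the same composite barrier-function approach the paper itself uses to prove the analogous \cref{thm:direct}: your $W$ is the paper's $h(t)$ with $v(\rho)\equiv 1$ and $\eta = 0$, the bookkeeping showing that $\nabla_\theta h^{r\,\top}_{\hat{\theta}}\dot{\hat{\theta}} - \tfrac{1}{\gamma}\tilde{\theta}^\top\dot{\hat{\theta}}$ under \cref{eq:radapt} reproduces exactly the $\Lambda_{\hat{\theta}}$ drift checks out, and the comparison argument closes the proof. The only caveats --- needing projection (or some other mechanism) to guarantee $\tilde{\theta}^\top\tilde{\theta}\leq\tilde{\vartheta}^\top\tilde{\vartheta}$ along trajectories, and the fact that invariance is really established for $\{W\geq 0\}$ with admissibility of $\gamma$ supplying $W(0)\geq 0$ --- are ones the paper itself leaves implicit in its own proofs.
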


\section{Main Results}
\label{sec:results}

\subsection{Overview}

This section contains the main results of this work. 
First, it is shown that set invariance is achieved by combining the so-called \emph{unmatched control barrier function} and direct adaptive control with online adaptation gain adjustment.
Then, several modifications to the direct adaptive control algorithm that can improve transients and reduce conservatism are discussed.
Finally, a unified adaptive safety tracking min-norm controller is presented.

The subsequent results will make use of a special class functions called scaling functions.

\begin{definition}[Scaling Function]
\label{def:scaling}
A \emph{scaling function} $v : \mathbb{R} \rightarrow \mathbb{R}$ satisfies the following conditions for $\zeta > 1$ and $\rho \in \mathcal{R} \subset \mathbb{R}$
\begin{gather}
         1 \leq v(\rho) \leq \zeta < \infty,  \tag{C1} \\
         \nabla v(\rho) > 0. \tag{C2}
\end{gather}
\end{definition}

\begin{remark}
    One example of a suitable scaling function is  $v(\rho) = \arctan(\rho) + 1$  where  $\rho \in [0~10]$. 
\end{remark}

\subsection{Direct Adaptive Safety}

We will consider two safe sets defined by a continuously differentiable function $h_\theta:\mathbb{R}^n\times \mathbb{R}^p\rightarrow\mathbb{R}$, namely  $\mathcal{C}_\theta \triangleq \left\{ x \in \mathbb{R}^n,  \theta \in \Theta : h_\theta(x) \geq 0  \right\}$  and  $\mathcal{C}^r_\theta \triangleq \{ x \in \mathbb{R}^n,  \theta \in \Theta : h_\theta(x) \geq \tfrac{1}{2\gamma} \tilde{\vartheta}^\top \tilde{\vartheta}\} $ where $ \mathcal{C}^r_\theta \subseteq \mathcal{C}_\theta$.

\begin{definition}[Unmatched Control Barrier Function]
\label{def:ucbf}
Let $\mathcal{C}_\theta$ be a family of 0-superlevel sets parameterized by $\theta$ for a continuously differentiable function $h_\theta:\mathbb{R}^n\times \mathbb{R}^p\rightarrow\mathbb{R}$.
The function $h_\theta(x)$ is an \emph{unmatched control barrier function} (uCBF) if there exists a controller $u \in \mathbb{U}$ and extended class $\mathcal{K}_\infty$ function $\alpha(\cdot)$ such that for every $\theta\in\Theta$
\begin{equation}
\label{eq:ucbfu}
    \underset{u  \in  \mathbb{U}}{\text{sup}}~ \left\{ \nabla_x h_\theta(x)^\top \left[ f(x) - \Delta(x)^\top \theta + g(x) u\right]\right\} \geq - \alpha(h_\theta(x)).
\end{equation}
\end{definition}

\begin{remark}
    In addition to  $\alpha(\cdot) \in \mathcal{K}_\infty$, we require that  $c \alpha(r) \leq \alpha(cr)$  for  $c\geq 1$.
    This property is not restrictive as it satisfied by many common choices for  $\alpha(\cdot)$  found in the literature. 
\end{remark}

Fundamentally, condition \cref{eq:ucbfu} states that there exists a controller that renders $\mathcal{C}_\theta$ invariant for every $\theta \in \Theta$.
Or, put another way, the uncertain system \cref{eq:dyn} can be made safe for every $\theta \in \Theta$.
This is analogous to an uncertain system being stabilizable for every $\theta \in \Theta$ in the context of adaptive control, as discussed in \cite{lopez2021universal}.
Note that \cref{eq:ucbfu} is an invariance condition for $\mathcal{C}_\theta$ with the \emph{actual} dynamics as opposed to the modified dynamics used in \cite{taylor2020adaptive,lopez2020robust}.
This distinction has both theoretical and practical implications.
In particular, safety is now formulated as an inherent property of the system since the actual dynamics are being evaluated for safety. 
Moreover, in terms of constructing an uCBF, the safety condition \cref{eq:ucbfu} is much easier to verify as it preservers bilinearity of $h_{\theta}(x)$ and $u$ \cite{ames2019control}.
Conversely, the condition for an aCBF or RaCBF is nonconvex so systematically constructing either barrier function is more difficult.

The following theorem establishes forward invariance of a parameter-dependent safe set $\mathcal{C}_{\theta}$ when \cref{def:ucbf} is combined with direct parameter adaption and online adjustment of the adaptation gain.

\begin{theorem}
\label{thm:direct}
Let $\mathcal{C}_{\hat{\theta}}$ be a 0-superlevel set of a continuously differentiable function $h_{\hat{\theta}}:\mathbb{R}^n \times \mathbb{R}^p\rightarrow\mathbb{R}$. If $h_{\hat{\theta}}(x)$  is an uCBF on $\mathcal{C}^r_{\hat{\theta}} \triangleq \{x \in \mathbb{R}^n,  \hat{\theta} \in \Theta : h_{\hat{\theta}}(x) \geq \tfrac{1}{2\gamma} \tilde{\vartheta}^\top \tilde{\vartheta}\} \subseteq \mathcal{C}_{\hat{\theta}}$ then any locally Lipschitz continuous controller satisfying
\begin{equation}
\label{eq:rucbfu}
    \underset{u  \in  \mathbb{U}}{\mathrm{sup}}~ \left\{ \nabla_x h_{\hat{\theta}}(x)^\top \left[ f(x) - \Delta(x)^\top \hat{\theta} + g(x) u\right]\right\} \geq - \alpha\left(h_{\hat{\theta}}(x) - \tfrac{1}{2\gamma} \tilde{\vartheta}^\top \tilde{\vartheta}\right).
\end{equation}
renders $\mathcal{C}_{\hat{\theta}}$ safe with the \emph{direct adaptation law}
\begin{subequations}
\label{eq:adapt}
    \begin{align}
        \dot{\hat{\theta}} &= \gamma v(\rho) \Delta(x) \nabla_x h_{\hat{\theta}}(x) \label{eq:theta_direct} \\
        \dot{\rho} &= - \frac{v(\rho)}{\nabla v(\rho)} \frac{1}{h_{\hat{\theta}}(x)+\eta} \nabla_{\hat{\theta}} h_{\hat{\theta}}(x)^\top  \dot{\hat{\theta}} \label{eq:rho_direct}
    \end{align}
\end{subequations}
where $\gamma$ is an admissible adaptation gain, $v(\rho)$ is a scaling function, and $\eta \in \mathbb{R}_{>0}$ is a design parameter.
\end{theorem}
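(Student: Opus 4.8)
The plan is to recover the standard comparison-lemma proof of forward invariance, but for a single \emph{composite} barrier function that tracks safety and the parameter estimation error jointly and that carries the scaling $v(\rho)$ in exactly the place where the $\rho$-adaptation in \eqref{eq:rho_direct} can cancel the term that obstructs certainty equivalence. Concretely, I would work with
\[
  B(x,\hat{\theta},\rho)\;\triangleq\;v(\rho)\bigl(h_{\hat{\theta}}(x)+\eta\bigr)-\tfrac{1}{2\gamma}\,\tilde{\theta}^{\top}\tilde{\theta},
\]
noting that, because $1\le v(\rho)\le\zeta$, keeping $B$ above a fixed positive level forces $h_{\hat{\theta}}(x)\ge 0$, and that for admissible initial data with $x_{0}\in\mathcal{C}^{r}_{\hat{\theta}}$ one has $B(x_{0},\hat{\theta}_{0},\rho_{0})>0$ since $h_{\hat{\theta}}(x_{0})\ge\tfrac{1}{2\gamma}\tilde{\vartheta}^{\top}\tilde{\vartheta}\ge\tfrac{1}{2\gamma}\tilde{\theta}^{\top}\tilde{\theta}$ by Assumption~\ref{assumption:params}.

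The core step is computing $\dot{B}$ along closed-loop solutions of \eqref{eq:dyn}, \eqref{eq:adapt}, and the controller. The obstruction to certainty equivalence is the term $\nabla_{\hat{\theta}}h_{\hat{\theta}}(x)^{\top}\dot{\hat{\theta}}$ appearing in $\dot{h}_{\hat{\theta}}$ because the barrier is parameter dependent. The defining property of \eqref{eq:rho_direct} is that $\tfrac{d}{dt}\ln v(\rho)=-\bigl(h_{\hat{\theta}}(x)+\eta\bigr)^{-1}\nabla_{\hat{\theta}}h_{\hat{\theta}}(x)^{\top}\dot{\hat{\theta}}$, which makes
\[
  \tfrac{d}{dt}\bigl[v(\rho)\bigl(h_{\hat{\theta}}(x)+\eta\bigr)\bigr]=v(\rho)\,\nabla_{x}h_{\hat{\theta}}(x)^{\top}\dot{x},
\]
so that term is absorbed without ever building a barrier for a modified system as in \cref{def:racbf}. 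Inserting \eqref{eq:dyn}, writing $\theta=\hat{\theta}-\tilde{\theta}$, and invoking the uCBF controller inequality \eqref{eq:rucbfu} reduces the right-hand side to $-v(\rho)\,\alpha\bigl(h_{\hat{\theta}}(x)-\tfrac{1}{2\gamma}\tilde{\vartheta}^{\top}\tilde{\vartheta}\bigr)+v(\rho)\,\nabla_{x}h_{\hat{\theta}}(x)^{\top}\Delta(x)^{\top}\tilde{\theta}$; and the direct law \eqref{eq:theta_direct} turns the last, sign-indefinite, term into $\tfrac{1}{\gamma}\tilde{\theta}^{\top}\dot{\hat{\theta}}=\tfrac{d}{dt}\bigl(\tfrac{1}{2\gamma}\tilde{\theta}^{\top}\tilde{\theta}\bigr)$, which cancels the parameter-error contribution to $\dot{B}$. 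I therefore expect to arrive at $\dot{B}\ge-v(\rho)\,\alpha\bigl(h_{\hat{\theta}}(x)-\tfrac{1}{2\gamma}\tilde{\vartheta}^{\top}\tilde{\vartheta}\bigr)$.

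It then remains to convert this into a scalar differential inequality in $B$ alone. Using $1\le v(\rho)\le\zeta$, the bound $\tilde{\theta}^{\top}\tilde{\theta}\le\tilde{\vartheta}^{\top}\tilde{\vartheta}$ (which holds on the parameter-error set and can be maintained by a standard projection on \eqref{eq:theta_direct} without affecting the sign of the adaptation term), and $\eta>0$, a short calculation gives $v(\rho)\bigl(h_{\hat{\theta}}(x)-\tfrac{1}{2\gamma}\tilde{\vartheta}^{\top}\tilde{\vartheta}\bigr)\le B-\eta$. Splitting into the cases $h_{\hat{\theta}}(x)-\tfrac{1}{2\gamma}\tilde{\vartheta}^{\top}\tilde{\vartheta}\ge 0$ (use monotonicity of $\alpha$ together with the assumed property $c\,\alpha(r)\le\alpha(cr)$ for $c\ge 1$) and $<0$ (the right-hand side is already nonnegative) yields $\tfrac{d}{dt}(B-\eta)\ge-\alpha(B-\eta)$ whenever $B\ge\eta$. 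The comparison lemma---exactly as in the non-adaptive CBF theory and in \cref{thm:racbf}---gives $B(t)\ge\eta$ for all $t\ge t_{0}$, and unwinding $v(\rho)\le\zeta$ returns $h_{\hat{\theta}}(x(t))\ge 0$, i.e.\ $\mathcal{C}_{\hat{\theta}}$ is forward invariant. Along the way one also obtains $h_{\hat{\theta}}(x)+\eta>0$, so \eqref{eq:rho_direct} stays well posed, and with local Lipschitzness of the controller and of \eqref{eq:adapt} the closed loop is forward complete, so the argument applies on all of $[t_{0},\infty)$.

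The main obstacle I anticipate is the middle step: choosing the composite $B$ so that the $\rho$-dynamics annihilate $\nabla_{\hat{\theta}}h_{\hat{\theta}}(x)^{\top}\dot{\hat{\theta}}$ in precisely the form $\dot{B}$ produces it, and then carrying the scaling $v(\rho)$ through the comparison step without destroying the class-$\mathcal{K}_{\infty}$ structure---this is where the online gain adjustment plays the role that the modified dynamics $\Lambda_{\hat{\theta}}$ play in \cref{def:racbf} and \cref{thm:racbf}, and it is why certainty equivalence can be restored. Everything else (the gradient cancellation, the parameter-error bookkeeping, and well-posedness) is routine by comparison.
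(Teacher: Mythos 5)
Your construction is the paper's own: the composite function $B$ is exactly the barrier-like function $h(t)=v(\rho)\bigl(h_{\hat{\theta}}(x)+\eta\bigr)-\tfrac{1}{2\gamma}\tilde{\theta}^\top\tilde{\theta}$ used in the proof of \cref{thm:direct}, the $\rho$-dynamics \cref{eq:rho_direct} cancel $v(\rho)\nabla_{\hat{\theta}}h_{\hat{\theta}}(x)^\top\dot{\hat{\theta}}$, the direct law \cref{eq:theta_direct} cancels $v(\rho)\nabla_x h_{\hat{\theta}}(x)^\top\Delta(x)^\top\tilde{\theta}$ against $-\tfrac{1}{\gamma}\tilde{\theta}^\top\dot{\hat{\theta}}$, and you correctly arrive at $\dot{B}\ge -v(\rho)\,\alpha\bigl(h_{\hat{\theta}}(x)-\tfrac{1}{2\gamma}\tilde{\vartheta}^\top\tilde{\vartheta}\bigr)$. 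Up to that point your argument matches the paper's step for step.

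The gap is in the final comparison. You compare against the \emph{fixed} level $\eta$, conclude $B(t)\ge\eta$, and claim that ``unwinding $v(\rho)\le\zeta$'' returns $h_{\hat{\theta}}(x(t))\ge 0$. It does not: $B\ge\eta$ only gives $v(\rho)\bigl(h_{\hat{\theta}}(x)+\eta\bigr)\ge\eta+\tfrac{1}{2\gamma}\tilde{\theta}^\top\tilde{\theta}\ge\eta$, hence $h_{\hat{\theta}}(x)\ge\eta\bigl(1/v(\rho)-1\bigr)\ge-\eta\bigl(1-1/\zeta\bigr)$, which is strictly negative whenever $v(\rho)>1$ --- precisely the regime the gain-adjustment mechanism is designed to enter (concretely, $v(\rho)=2$, $\tilde{\theta}=0$, $h_{\hat{\theta}}(x)=-\eta/2$ is consistent with $B\ge\eta$). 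So your chain of inequalities establishes only an ISSf-type lower bound on $h_{\hat{\theta}}$, not forward invariance of $\mathcal{C}_{\hat{\theta}}$. The paper keeps the threshold scaled: from $h(t)\ge v(\rho)\bigl(h_{\hat{\theta}}(x)+\eta-\tfrac{1}{2\gamma}\tilde{\vartheta}^\top\tilde{\vartheta}\bigr)$ and $c\,\alpha(r)\le\alpha(cr)$ it obtains $\dot{h}\ge-\alpha\bigl(h-v(\rho)\eta\bigr)$ with the \emph{moving} threshold $v(\rho)\eta$, concludes $h(t)\ge v(\rho)\eta$, and then $v(\rho)\bigl(h_{\hat{\theta}}(x)+\eta\bigr)\ge h(t)\ge v(\rho)\eta$ divides through by $v(\rho)>0$ to give $h_{\hat{\theta}}(x)\ge0$ exactly. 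Your fixed-threshold comparison discards the factor of $v(\rho)$ at the one place where it cannot be discarded; to repair the proof, retain $v(\rho)\eta$ on the right-hand side of the comparison inequality. (The side remark about projecting \cref{eq:theta_direct} is unnecessary: $|\tilde{\theta}|\le\tilde{\vartheta}$ already follows from $\theta,\hat{\theta}\in\Theta$ under Assumption~\ref{assumption:params}.)
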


\begin{proof}
Consider the barrier-like function 
\begin{equation*}
    h(t) = v(\rho) \left(h_{\hat{\theta}}(x) + \eta \right) - \tfrac{1}{2\gamma} \tilde{\theta}^\top \tilde{\theta},
\end{equation*}
where $\eta > 0$.
We will show that the adaptation law \cref{eq:adapt} yields $h_{\hat{\theta}}(t) \geq 0$ for all $t$ which is equivalent to $\mathcal{C}_{\hat{\theta}}$ being forward invariant.
Differentiating $h(t)$ and applying \cref{eq:adapt} yields
\begin{equation*}
    \begin{aligned}
        \dot{h}(t)  &= v(\rho) \nabla_x h_{\hat{\theta}}(x)^\top \left[ f(x) - \Delta(x)^\top \theta + g(x) u \right] + v(\rho) \nabla_{\hat{\theta}} h_{\hat{\theta}}(x)^\top \dot{\hat{\theta}} + \dot{\rho} \nabla v(\rho) \left(h_{\hat{\theta}}(x)+\eta\right) - \tfrac{1}{\gamma}\tilde{\theta}^\top\dot{\hat{\theta}} \\
        & = v(\rho) \nabla_x h_{\hat{\theta}}(x)^\top \left[ f(x) - \Delta(x)^\top \hat{\theta} + \Delta(x)^\top \tilde{\theta} + g(x) u \right] + v(\rho) \nabla_{\hat{\theta}} h_{\hat{\theta}}(x)^\top \dot{\hat{\theta}} + \dot{\rho}\nabla v(\rho) \left(h_{\hat{\theta}}(x) + \eta\right) - \tfrac{1}{\gamma}\tilde{\theta}^\top\dot{\hat{\theta}} \\
        & = v(\rho) \nabla_x h_{\hat{\theta}}(x)^\top \left[ f(x) - \Delta(x)^\top \hat{\theta} + g(x) u \right] \\ 
        & \geq - \alpha \left( v(\rho) h_{\hat{\theta}}(x) - v(\rho)\tfrac{1}{2\gamma} \tilde{\vartheta}^\top\tilde{\vartheta}\right),
    \end{aligned}
\end{equation*}
where the inequality arises from \cref{eq:rucbfu} and the property  $c\alpha(r) \leq \alpha(cr)$  for  $c\geq1$. 
Since  $|\tilde{\theta} | \leq \tilde{\vartheta}$  and  $v(\rho) \geq 1$  then
\begin{equation*}
    \begin{aligned}
        h(t) & \geq   v(\rho)(h_{\hat{\theta}}(x)+\eta) - \tfrac{1}{2\gamma}\tilde{\vartheta}^\top\tilde{\vartheta} \\
          & \geq v(\rho)\left(h_{\hat{\theta}}(x)+\eta - \tfrac{1}{2\gamma}\tilde{\vartheta}^\top\tilde{\vartheta}\right),
    \end{aligned}
\end{equation*}
yielding $\dot{h}(t) \geq - \alpha \left( h(t) - v(\rho)\eta \right)$ which implies  $h(t) \geq v(\rho) \eta > 0$  for all  $t \geq 0$  if $h(0) \geq v(\rho(0)) \eta$.
Since  $v(\rho) \eta \leq h(t) \leq v(\rho) \left( h_{\hat{\theta}}(x) + \eta\right)$, then  $h_{\hat{\theta}}(x(t)) \geq 0$  for all $t$.
Therefore, the controller \cref{eq:rucbfu} and direct adaptation law \cref{eq:adapt} render the set  $\mathcal{C}_{\hat{\theta}}$  forward invariant. 
\end{proof}

\vskip -0.1in

\begin{remark}
One notable modification to the adaptation law in \cref{eq:adapt} is the use of Bregman divergence to impose physical consistency \cite{wensing2017linear,lee2018natural} or sparsity \cite{ghai2020exponentiated,boffi2021higher} on the parameter estimates. 
See \cref{def:bregman} in the Appendix.
\end{remark}

It is instructive to analyze the online gain adjustment mechanism to develop an intuition about how the technique achieves forward invariance. 
In the case of no gain adjustment, i.e., $\dot{\rho} = 0$, then $\dot{h}(t)$ becomes
\begin{equation*}
    \dot{h}(t) \geq -\alpha(h(t) - v(\rho)\eta) + v(\rho) \nabla_{\hat{\theta}} h_{\hat{\theta}}(x)^\top \dot{\hat{\theta}}.
\end{equation*}
If $\nabla_{\hat{\theta}}h_{\hat{\theta}}(x)^\top \dot{\hat{\theta}} \geq 0$ then safety is preserved as the same inequality used to prove \cref{thm:direct} is obtained. 
However, if $\nabla_{\hat{\theta}}h_{\hat{\theta}}(x)^\top \dot{\hat{\theta}} < 0$ then safety might be compromised since this could lead to $h(t) < 0$ and subsequently $h_{\hat{\theta}}(x) < 0$.
From \cref{eq:adapt}, we see that $\dot{\rho}$ will be of opposite sign of $\nabla_{\hat{\theta}}h_{\hat{\theta}}(x)^\top \dot{\hat{\theta}}$.
Hence, if the parameter adaptation transients is negative (unsafe), then $\rho$ increases resulting in a larger effective adaptation gain $\gamma v(\rho)$.
Conversely, if the transients is positive (safe) then $\rho$ decreases yielding a smaller effective adaptation gain.
In this scenario, one could also set $\dot{\rho} = 0$ without sacrificing safety.
To summarize, the effective adaptation gain $\gamma v(\rho)$ will increase if the parameter adaptation transients compromises forward invariance of  $\mathcal{C}_{\hat{\theta}}$  while  $\gamma v(\rho)$  will decrease or remain constant if the parameter adaptation transients preserves safety. 
Note that the above analysis of adaptation gain adjustment is analogous to that in universal adaptive control \cite{lopez2021universal} where the effective adaptation gain changes to achieve a stable closed-loop system.

\cref{thm:direct} requires the adaptation gain $\gamma$ be admissible in order to prove forward invariance of $\mathcal{C}_{\hat{\theta}}$. 
The following corollary establishes a lower bound on $\gamma$ thereby making it admissible.

\begin{corollary}
An admissible adaptation gain $\gamma$ for the adaptation law in \cref{thm:direct} satisfies the inequality
\begin{equation}
\label{eq:rate}
    \gamma \geq \frac{\tilde{\vartheta}^\top \tilde{\vartheta}}{2  h_{\hat{\theta}}(x_0)},
\end{equation}
where $h_{\hat{\theta}}(x_0) \triangleq h(x(0),\hat{\theta}(0))$.
\end{corollary}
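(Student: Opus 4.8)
The plan is to isolate, from the proof of \cref{thm:direct}, the single point at which admissibility of $\gamma$ is actually used, and then convert that condition into an explicit lower bound. Inspecting that proof, every step goes through for any $\gamma > 0$ except the final implication, which requires the initial-condition inequality $h(0) \geq v(\rho(0))\eta$ in order to conclude $h(t) \geq v(\rho)\eta > 0$ for all $t \geq 0$ and hence $h_{\hat{\theta}}(x(t)) \geq 0$. So a gain $\gamma$ is admissible precisely when this inequality can be guaranteed at $t=0$, and the task reduces to expressing it in terms of the known quantities $\tilde{\vartheta}$ and $h_{\hat{\theta}}(x_0)$.

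First I would substitute $h(t) = v(\rho)\big(h_{\hat{\theta}}(x)+\eta\big) - \tfrac{1}{2\gamma}\tilde{\theta}^\top\tilde{\theta}$ at $t=0$ into $h(0) \geq v(\rho(0))\eta$; the $v(\rho(0))\eta$ terms cancel, leaving the equivalent requirement $v(\rho(0))\,h_{\hat{\theta}}(x_0) \geq \tfrac{1}{2\gamma}\,\tilde{\theta}(0)^\top\tilde{\theta}(0)$. Next I would discharge the dependence on $\rho(0)$ and $\tilde{\theta}(0)$ using bounds already available in the paper: condition (C1) gives $v(\rho(0)) \geq 1$, and the componentwise bound $|\tilde{\theta}_i| \leq \tilde{\vartheta}_i$ recorded after \cref{assumption:params} gives $\tilde{\theta}(0)^\top\tilde{\theta}(0) = \sum_i \tilde{\theta}_i(0)^2 \leq \sum_i \tilde{\vartheta}_i^2 = \tilde{\vartheta}^\top\tilde{\vartheta}$. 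Hence a sufficient condition for admissibility is $h_{\hat{\theta}}(x_0) \geq \tfrac{1}{2\gamma}\,\tilde{\vartheta}^\top\tilde{\vartheta}$; since $x_0 \in \mathcal{C}^r_{\hat{\theta}}$ forces $h_{\hat{\theta}}(x_0) > 0$ whenever $\tilde{\vartheta} \neq 0$ (and the statement is vacuous otherwise), dividing through and rearranging yields exactly \cref{eq:rate}.

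There is no genuinely hard step; the only subtlety is that the bound must hold uniformly over the admissible initial offset $\rho(0)$ and over every $\tilde{\theta}(0) \in \tilde{\Theta}$, which is why the two worst-case replacements $v(\rho(0)) \to 1$ and $\tilde{\theta}(0)^\top\tilde{\theta}(0) \to \tilde{\vartheta}^\top\tilde{\vartheta}$ are the correct ones. I would also note in passing that retaining $v(\rho(0))$ yields the sharper but $\rho(0)$-dependent condition $\gamma \geq \tilde{\vartheta}^\top\tilde{\vartheta}/\big(2\,v(\rho(0))\,h_{\hat{\theta}}(x_0)\big)$, with \cref{eq:rate} being its $v(\rho(0))=1$ specialization, so the stated bound is the conservative, parameter-free choice.
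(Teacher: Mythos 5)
Your argument is correct and follows essentially the same route as the paper: substitute the definition of $h(t)$ at $t=0$ into the initial-condition requirement $h(0)\geq v(\rho(0))\eta$ from \cref{thm:direct}, bound $\tilde{\theta}(0)^\top\tilde{\theta}(0)$ by $\tilde{\vartheta}^\top\tilde{\vartheta}$, and rearrange to obtain \cref{eq:rate}. The only cosmetic difference is that the paper simply \emph{chooses} $v(\rho_0)=1$ whereas you treat $v(\rho(0))\geq 1$ as a worst case and note the sharper $\rho(0)$-dependent bound, which is a harmless (slightly more general) refinement of the same argument.
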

\begin{proof}
\cref{thm:direct} established $h_{\hat{\theta}}(x(t)) \in \mathcal{C}_{\hat{\theta}}$ uniformly if $h(0) \geq v(\rho_0) \eta$. We will show that this condition is satisfied with \cref{eq:rate}. Using the definition of $h(t)$,
\begin{equation*}
    \begin{aligned}
        h(0) & = v(\rho_0) \left(h_{\hat{\theta}}(x_0) + \eta \right) - \tfrac{1}{2\gamma} \tilde{\theta}_0^\top \tilde{\theta}_0 \\
        & \geq h_{\hat{\theta}}(x_0)+\eta - \tfrac{1}{2\gamma} \tilde{\vartheta}^\top \tilde{\vartheta}
    \end{aligned}
\end{equation*}
where we have chosen $v(\rho_0)=1$. Hence, $h(0) \geq v(\rho_0) \eta = \eta \iff \gamma \geq \frac{ \tilde{\vartheta}^\top \tilde{\vartheta}}{2 h_{\hat{\theta}}(x_0)}$, thus yielding \cref{eq:rate}.
\end{proof}

The lower bound for the adaptation gain is identical to that obtained in \cite{lopez2020robust} and similar to that in \cite{taylor2020adaptive} (in the latter case the initial parameter estimation error had to be known). 
Essentially, \cref{eq:rate} states that the closer $h_{\hat{\theta}}(x_0)$ is to $\partial \mathcal{C}_{\hat{\theta}}$ the faster the adaptation has to be in order to render $\mathcal{C}_{\hat{\theta}}$ invariant \cite{taylor2020adaptive,lopez2020robust}.

An interesting consequence of \cref{thm:direct} is that the set $\mathcal{C}^r_{\hat{\theta}} = \{ x \in \mathbb{R}^n,  \hat{\theta}\in\Theta: h_{\hat{\theta}}(x) \geq \tfrac{1}{2\gamma} \tilde{\vartheta}^\top \tilde{\vartheta} \}$ is input-to-state safe \cite{kolathaya2018input} with the proposed adaptive safety controller.

\begin{corollary}
\label{cor:issf}
The set $\mathcal{C}^r_{\hat{\theta}} = \{ x \in \mathbb{R}^n,  \hat{\theta}\in\Theta: h_{\hat{\theta}}(x) \geq \tfrac{1}{2\gamma} \tilde{\vartheta}^\top \tilde{\vartheta} \}$ is input-to-state safe (ISSf) with the controller and adaptation law in \cref{thm:direct}.
\end{corollary}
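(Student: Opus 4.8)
The plan is to obtain the result as a near-immediate reinterpretation of \cref{thm:direct}, with the parameter estimation error bound $\tilde{\vartheta}$ --- equivalently, the unmatched disturbance $\Delta(x)^\top\tilde{\theta}$ that it induces --- playing the role of the exogenous input in the notion of input-to-state safety of \cite{kolathaya2018input}. Concretely, $\mathcal{C}^r_{\hat{\theta}}$ is ISSf if there is a class $\mathcal{K}_\infty$ function $\iota$ such that the inflated set $\{x \in \mathbb{R}^n,\ \hat{\theta} \in \Theta : h_{\hat{\theta}}(x) \geq \tfrac{1}{2\gamma}\tilde{\vartheta}^\top\tilde{\vartheta} - \iota(\|\tilde{\vartheta}\|)\}$ is rendered forward invariant by the controller \cref{eq:rucbfu} and the direct adaptation law \cref{eq:adapt}.

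First I would invoke \cref{thm:direct}, whose hypotheses coincide with those assumed here, to conclude that every closed-loop trajectory satisfies $h_{\hat{\theta}}(x(t)) \geq 0$ for all $t$. Next I would take $\iota(s) \triangleq \tfrac{1}{2\gamma}s^2$, which is a class $\mathcal{K}_\infty$ function, so that $\tfrac{1}{2\gamma}\tilde{\vartheta}^\top\tilde{\vartheta} - \iota(\|\tilde{\vartheta}\|) = 0$; chaining the two bounds gives $h_{\hat{\theta}}(x(t)) \geq \tfrac{1}{2\gamma}\tilde{\vartheta}^\top\tilde{\vartheta} - \iota(\|\tilde{\vartheta}\|)$ for all $t$, i.e., the $\iota$-inflation of $\mathcal{C}^r_{\hat{\theta}}$ is forward invariant, which is exactly the ISSf property. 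The one point that needs care --- and that makes the statement nonvacuous rather than a relabeling --- is to verify that $\iota$ is a bona fide class $\mathcal{K}_\infty$ gain and that it collapses in the zero-uncertainty limit: as $\|\tilde{\vartheta}\| \to 0$ one has $\iota(\|\tilde{\vartheta}\|) \to 0$ and $\mathcal{C}^r_{\hat{\theta}} \to \mathcal{C}_{\hat{\theta}}$, recovering the uncertainty-free forward invariance guarantee in the limit.

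An essentially identical argument would also cover the more standard reading in which an exogenous disturbance $d(t)$ is appended to \cref{eq:dyn}. There I would re-run the proof of \cref{thm:direct} verbatim; the only change is a residual term $v(\rho)\nabla_x h_{\hat{\theta}}(x)^\top d$ left in $\dot{h}(t)$, which on a compact operating domain is bounded below by $-\zeta\bar{L}\|d\|_\infty$ with $\bar{L}$ an upper bound for $\|\nabla_x h_{\hat{\theta}}(x)\|$ on that domain, so that $\dot{h}(t) \geq -\alpha(h(t) - v(\rho)\eta) - \zeta\bar{L}\|d\|_\infty$. The comparison step of \cref{thm:direct}, repeated with the constant perturbation kept, would then yield $h_{\hat{\theta}}(x(t)) \geq -\iota(\|d\|_\infty)$ for a class $\mathcal{K}_\infty$ function $\iota$ built from $\alpha^{-1}$, $\zeta$, and $\bar{L}$, which is again an inflation of $\mathcal{C}^r_{\hat{\theta}}$. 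The hard part in this variant is the same subtlety already negotiated inside the proof of \cref{thm:direct} --- controlling the contribution of the $\dot{\rho}\,\nabla v(\rho)\,(h_{\hat{\theta}}(x)+\eta)$ term from \cref{eq:rho_direct} when $h(t)$ sits at the lower comparison threshold, together with establishing the uniform gradient bound $\bar{L}$; with those in hand the remainder is bookkeeping.
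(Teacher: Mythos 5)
Your first paragraph is essentially the paper's own proof: the paper simply notes that $\mathcal{C}^r_{\hat{\theta}} \subseteq \mathcal{C}_{\hat{\theta}}$ and that \cref{thm:direct} renders $\mathcal{C}_{\hat{\theta}}$ forward invariant, which is exactly your argument once you observe that the $\iota(\|\tilde{\vartheta}\|)$-inflation of $\mathcal{C}^r_{\hat{\theta}}$ with $\iota(s) = \tfrac{1}{2\gamma}s^2$ is precisely $\mathcal{C}_{\hat{\theta}}$; making the class $\mathcal{K}_\infty$ gain explicit is a mild refinement of the paper's one-line appeal to the definition, not a different route. Your second paragraph on exogenous disturbances $d(t)$ goes beyond what the corollary claims and is not part of the paper's proof, so it can be dropped.
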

\begin{proof}
Follows immediately from the definition of ISSf which states that a set is ISSf if it is a subset of a forward invariant set. 
Since the controller and adaptation law render  $\mathcal{C}_{\hat{\theta}}$  invariant, and  $\mathcal{C}^r_{\hat{\theta}} \subseteq \mathcal{C}_{\hat{\theta}}$, then  $\mathcal{C}^r_{\hat{\theta}}$  is ISSf.
\end{proof}

\begin{remark}
Future work will investigate strengthening \cref{cor:issf} to show that the set  $\mathcal{C}^r_{\hat{\theta}}$  is asymptotically stable, as is the case with a RaCBF (see \cref{prop:asym_stable} in Appendix).  
\end{remark}

Depending on the choice of $v(\rho)$, one may need to modify \cref{eq:adapt} in order for  $\rho$  and  $v(\rho)$  to remain bounded.
One possibility is to reset  $\rho$  once it exceeds a certain threshold.
Even though  $u$,  $\hat{\theta}$,  and  $h_{\hat{\theta}}(x)$  remain continuous after a reset, a thorough analysis is required to ensure the closed-loop system remains safe despite the barrier-like function  $h(t)$  decreasing after the reset.
Alternatively, one could add damping to  $\dot{\rho}$,  thereby bounding $\rho$ but at the expense of rendering  $\mathcal{C}_{\hat{\theta}}$  ISSf, as shown in the following corollary.

\begin{corollary}
\label{corr:leak}
Let $\mathcal{C}_{\hat{\theta}}$ be a 0-superlevel set of a continuously differentiable function $h_{\hat{\theta}}:\mathbb{R}^n \times \mathbb{R}^p\rightarrow\mathbb{R}$. If $h_{\hat{\theta}}(x)$  is an uCBF on $\mathcal{C}^r_{\hat{\theta}} \triangleq \{x \in \mathbb{R}^n,  \hat{\theta} \in \Theta : h_{\hat{\theta}}(x) \geq \tfrac{1}{2\gamma} \tilde{\vartheta}^\top \tilde{\vartheta}\} \subseteq \mathcal{C}_{\hat{\theta}}$ then any locally Lipschitz continuous controller satisfying
\begin{equation}
\tag{\ref{eq:rucbfu}}
    \underset{u  \in  \mathbb{U}}{\mathrm{sup}}~ \left\{ \nabla_x h_{\hat{\theta}}(x)^\top \left[ f(x) - \Delta(x)^\top \hat{\theta} + g(x) u\right]\right\} \geq - \alpha\left(h_{\hat{\theta}}(x) - \tfrac{1}{2\gamma} \tilde{\vartheta}^\top \tilde{\vartheta}\right).
\end{equation}
renders $\mathcal{C}_{\hat{\theta}}$ input-to-state safe with the adaptation law
\begin{subequations}
\label{eq:adapt_leak}
    \begin{align}
        \dot{\hat{\theta}} &= \gamma v(\rho) \Delta(x) \nabla_x h_{\hat{\theta}}(x), \label{eq:theta_leak} \\
        \dot{\rho} &= \frac{v(\rho)}{\nabla v(\rho)} \frac{1}{h_{\hat{\theta}}(x)+\eta} \left[  -\sigma \rho + w_{\hat{\theta}}(x)  \right] , \label{eq:leak}
    \end{align}
\end{subequations}
where
\begin{equation}
\begin{aligned}
\label{eq:s_rho}
    w_{\hat{\theta}}(x) = \begin{cases} 0 ~ &\text{if} ~~ \nabla_{\hat{\theta}}h_{\hat{\theta}}(x)^\top  \dot{\hat{\theta}} \geq 0 \\
     - \zeta \nabla_{\hat{\theta}} h_{\hat{\theta}}(x)^\top \left[\gamma \Delta(x) \nabla_x h_{\hat{\theta}}(x)\right] ~~~~ &\text{otherwise},
    \end{cases}
\end{aligned}
\end{equation}

and $\gamma$ is an admissible adaptation gain, $v(\rho)$ is a scaling function, and $\eta,\sigma \in \mathbb{R}_{>0}$ are design parameters.
\end{corollary}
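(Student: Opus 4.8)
The plan is to follow the proof of \cref{thm:direct} almost verbatim, with the leakage term $-\sigma\rho$ playing the role of a bounded disturbance so that a suitable enlargement of $\mathcal{C}_{\hat{\theta}}$ is rendered forward invariant --- which is exactly the ISSf property \cite{kolathaya2018input}. I would again take the barrier-like function $h(t) = v(\rho)(h_{\hat{\theta}}(x)+\eta) - \tfrac{1}{2\gamma}\tilde{\theta}^\top\tilde{\theta}$ and differentiate along \cref{eq:adapt_leak}. Since $\dot{\hat{\theta}}$ in \cref{eq:theta_leak} coincides with \cref{eq:theta_direct}, the cross term $v(\rho)\nabla_x h_{\hat{\theta}}(x)^\top\Delta(x)^\top\tilde{\theta}$ cancels $-\tfrac{1}{\gamma}\tilde{\theta}^\top\dot{\hat{\theta}}$ exactly as before, and substituting $\dot{\rho}\,\nabla v(\rho)(h_{\hat{\theta}}(x)+\eta) = v(\rho)[-\sigma\rho + w_{\hat{\theta}}(x)]$ from \cref{eq:leak} leaves
\begin{equation*}
    \dot{h}(t) = v(\rho)\nabla_x h_{\hat{\theta}}(x)^\top[f(x)-\Delta(x)^\top\hat{\theta}+g(x)u] + v(\rho)[\nabla_{\hat{\theta}}h_{\hat{\theta}}(x)^\top\dot{\hat{\theta}} + w_{\hat{\theta}}(x)] - v(\rho)\sigma\rho .
\end{equation*}

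The crux is showing the middle term is nonnegative, which is where the over-correction built into \cref{eq:s_rho} together with the upper bound $\zeta$ of the scaling function enters. Writing $q \triangleq \nabla_{\hat{\theta}}h_{\hat{\theta}}(x)^\top\gamma\Delta(x)\nabla_x h_{\hat{\theta}}(x)$, one has $\nabla_{\hat{\theta}}h_{\hat{\theta}}(x)^\top\dot{\hat{\theta}} = v(\rho)\,q$, which has the same sign as $q$ since $v(\rho)>0$. If $q\geq 0$ then $w_{\hat{\theta}}(x)=0$ and the middle term is $v(\rho)^2 q\geq 0$; if $q<0$ then $w_{\hat{\theta}}(x)=-\zeta q$ and the middle term equals $v(\rho)\,q\,(v(\rho)-\zeta)\geq 0$ because $v(\rho)\leq\zeta$. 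Thus, whereas \cref{thm:direct} cancels the adaptation-transient term exactly, the leakage law over-corrects by the factor $\zeta$ and leaves a harmless nonnegative residual. From here I would invoke \cref{eq:rucbfu}, the property $c\alpha(r)\leq\alpha(cr)$ with $c=v(\rho)\geq 1$, and $|\tilde{\theta}|\leq\tilde{\vartheta}$ exactly as in the proof of \cref{thm:direct}, and use $v(\rho)\eta\geq\eta$ and monotonicity of $\alpha$ to obtain $\dot{h}(t)\geq -\alpha(h(t)-\eta)-v(\rho)\sigma\rho$.

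It then remains to bound the leakage term. From \cref{eq:leak}, $\dot{\rho}$ equals a strictly positive scalar times $-\sigma\rho+w_{\hat{\theta}}(x)$; since $w_{\hat{\theta}}(x)$ is continuous in $(x,\hat{\theta})$ and hence bounded by some $\bar{w}$ on the compact region of interest, $\dot{\rho}<0$ whenever $\rho>\bar{w}/\sigma$ and $\dot{\rho}>0$ whenever $\rho$ is sufficiently negative, so $\rho(t)$ remains in a bounded interval (contained in $\mathcal{R}$ for $\sigma$ large enough and $\rho_0$ admissible). Consequently $v(\rho(t))\sigma\rho(t)$ is bounded by a finite constant $\bar{d}$ and $\dot{h}(t)\geq -\alpha(h(t)-\eta)-\bar{d}$; the standard ISSf comparison argument \cite{kolathaya2018input} then yields a finite lower bound $h(t)\geq\underline{h}$ for all $t$ whenever $x_0\in\mathcal{C}_{\hat{\theta}}$. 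Finally, since $v(\rho)(h_{\hat{\theta}}(x)+\eta) = h(t)+\tfrac{1}{2\gamma}\tilde{\theta}^\top\tilde{\theta}\geq h(t)$ with $1\leq v(\rho)\leq\zeta$, this translates into $h_{\hat{\theta}}(x(t))\geq\min\{\underline{h}/\zeta,\underline{h}\}-\eta=:-c_0$, so the enlarged set $\{x : h_{\hat{\theta}}(x)\geq -c_0\}\supseteq\mathcal{C}_{\hat{\theta}}$ is forward invariant and $\mathcal{C}_{\hat{\theta}}$ is ISSf. I expect the main obstacle to be making the boundedness of $\rho$ fully rigorous: the factor $\tfrac{1}{h_{\hat{\theta}}(x)+\eta}$ in \cref{eq:leak} must remain well-posed (i.e. $h_{\hat{\theta}}$ bounded away from $-\eta$, which the above lower bound guarantees only a posteriori, so the argument should be run on the maximal interval up to the first time of violation to avoid circularity), and $w_{\hat{\theta}}$ must be bounded along the closed-loop trajectory, which requires the state to stay in a compact set.
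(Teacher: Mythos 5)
Your proposal is correct and follows essentially the same route as the paper: the same barrier-like function, the same key observation that $w_{\hat{\theta}}(x)+\nabla_{\hat{\theta}}h_{\hat{\theta}}(x)^\top\dot{\hat{\theta}}\geq 0$ via the case split on the sign of the transient using $v(\rho)\leq\zeta$, boundedness and nonnegativity of $\rho$ from the stable first-order filter structure, and the ISSf conclusion from forward invariance of an enlarged superset of $\mathcal{C}_{\hat{\theta}}$. Your explicit sign computation for the middle term and your remark about avoiding circularity when establishing boundedness of $\rho$ are in fact more careful than the paper's "following identical steps" shortcut.
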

\begin{proof}
We must first establish that  $\rho$  is bounded from above and non-negative before showing  $\mathcal{C}_{\hat{\theta}}$  is ISSf.
Let  $d_{\hat{\theta}}(x,\rho) \triangleq \sigma \tfrac{v(\rho)}{\nabla v(\rho)} \tfrac{1}{h_{\hat{\theta}}(x)+\eta}$  which is strictly positive based on \cref{def:scaling,def:ucbf} and $\sigma, \eta \in \mathbb{R}_{>0}$.
Since \cref{eq:leak} is a stable\footnote{Stability can be established by forming the virtual system $\dot{y} = - d_{\hat{\theta}}(x,\rho) y + w_{\hat{\theta}}(x)$ where $d_{\hat{\theta}}(x,\rho)  > 0$ and $w_{\hat{\theta}}(x)$ is a bounded input. Note that  $ d_{\hat{\theta}}(x,\rho) \rho$  is commonly referred to as a \emph{leakage term}.} first order filter with a bounded input  $w_{\hat{\theta}}(x)$  (under the premise  $x,\hat{\theta}$  are bounded and  $\Delta(x),  h_{\hat{\theta}}(x)$  are continuously differentiable), then  $\rho$  must remain bounded.
Moreover, one can show that  $\rho(t) \geq 0$  for all  $t\geq 0$  if  $\rho(0) \geq 0$.
First consider the simple case where  $\nabla_{\hat{\theta}}h_{\hat{\theta}}(x)^\top  \dot{\hat{\theta}} \geq 0$. 
As noted above, this scenario yields the same forward invariance inequality used in the proof of \cref{thm:direct} so  $w_{\hat{\theta}}(x)$  can be trivially set to zero.
If  $\rho > 0$  then  $\rho \rightarrow 0$  exponentially with rate  $d(x,\rho)$  since  $w_{\hat{\theta}}(x) = 0$.
Conversely, if  $\nabla_{\hat{\theta}}h_{\hat{\theta}}(x)^\top  \dot{\hat{\theta}} < 0$  then $w_{\hat{\theta}}(x) + \nabla_{\hat{\theta}}h_{\hat{\theta}}(x)^\top  \dot{\hat{\theta}} > 0$  which implies that  $w_{\hat{\theta}}(x) > 0$  and subsequently  $\rho > 0$  if $\rho(0) \geq 0$.
Hence,  $\rho \geq 0$  and is bounded from above.

Following identical steps to those taken in the proof of \cref{thm:direct}, one obtains $\dot{h}(t) \geq - \alpha(h(t)-v(\rho)\eta) - \sigma v(\rho) \rho$ where  $\sigma v(\rho) \rho \geq 0$  and is bounded.
This yields
\begin{equation*}
    h(t) \geq v(\rho)\eta - \alpha^{-1}(\sigma v(\rho) \rho) \implies h_{\hat{\theta}}(x) \geq - \tfrac{1}{v(\rho)} \alpha^{-1} (\sigma v(\rho) \rho),
\end{equation*}
so the set  $\bar{\mathcal{C}}_{\hat{\theta}} \triangleq \{x\in \mathbb{R}^n,  \hat{\theta} \in \Theta : h_{\hat{\theta}}(x) + \tfrac{1}{v(\rho)} \alpha^{-1} (\sigma v(\rho) \rho) \geq 0 \}$  if forward invariant.
Moreover, since  $\bar{\mathcal{C}}_{\hat{\theta}} \supseteq \mathcal{C}_{\hat{\theta}} $ then  $\mathcal{C}_{\hat{\theta}}$  is input-to-state safe with \cref{eq:rucbfu,eq:adapt_leak,eq:s_rho}.
\end{proof}

\begin{remark}
\label{remark:rho_0}
A convenient byproduct of adding damping to $\dot{\rho}$ is that it naturally restores  $\rho$  to zero when the adaptation transients does not negatively impact safety.
If one designs  $v(\rho)$  so that  $v(0) = 1$  then the effective gain  $\gamma v(\rho)$  also returns to its nominal value  $\gamma$  without compromising safety.
\end{remark}

\begin{remark}
The proposed modification to the  $\rho$  dynamics is similar to the  $\sigma-$modification \cite{ioannou1984instability,ioannou1986robust}  and  $e$-modification \cite{narendra2012stable} used in adaptive control to improve robustness and transients.
Despite their similarities, this type of modification for online adjustment of the adaptation gain is quite novel.
Moreover, as discussed in \cref{remark:rho_0}, it is beneficial for $\rho\rightarrow0$ which is considered a detrimental behavior for the parameter estimates as they unlearn the values that yielded small tracking error \cite{narendra2012stable}.
\end{remark}

\subsection{Composite Adaptive Safety}

The direct adaptive safety controller in \cref{thm:direct} guarantees the safe set defined by  $\mathcal{C}_\theta$  is forward invariant for all possible models.
Better parameter adaptation transients can be obtained by combining \cref{eq:adapt} with a model estimator. 
There are a plethora of suitable model estimators that can be used with \cref{eq:adapt}.
One example of an effective and simple estimator is the state predictor.

\begin{definition}[State Predictor]
    \label{def:ctsp}
    The \emph{state predictor} is $\varepsilon_{\hat{\theta}}(x) \triangleq \dot{x}-\dot{x}_{\hat{\theta}}$ where $\dot{x}_{\hat{\theta}} $ is the instantaneous velocity vector with the current parameter estimate, i.e., $  \dot{x}_{\hat{\theta}} = f(x) - \Delta(x)^\top \hat{\theta} + g(x)u$.
\end{definition}

An important property of the state predictor is that it $\varepsilon_{\hat{\theta}}$ can be written as $\varepsilon_{\hat{\theta}} = \Delta(x)^\top \tilde{\theta}$.
If the state velocities are not directly available, one can use a filtered velocity generated by a first order filter; see \cite{lopez2022adaptive} for details.
The following theorem shows that set invariance is still maintained with the state predictor and direct adaptation law \cref{eq:adapt}.

\begin{theorem}
\label{thm:composite}
Let $\mathcal{C}_{\hat{\theta}}$ be a 0-superlevel set of a continuously differentiable function $h_{\hat{\theta}}:\mathbb{R}^n \times \mathbb{R}^p\rightarrow\mathbb{R}$. If $h_{\hat{\theta}}(x)$  is an uCBF on $\mathcal{C}^r_{\hat{\theta}} \triangleq \{x \in \mathbb{R}^n,  \hat{\theta} \in \Theta : h_{\hat{\theta}}(x) \geq \tfrac{1}{2\gamma} \tilde{\vartheta}^\top \tilde{\vartheta}\} \subseteq \mathcal{C}_{\hat{\theta}}$ then any locally Lipschitz continuous controller satisfying
\begin{equation}
    \underset{u  \in  \mathbb{U}}{\text{sup}}~ \left\{ \nabla_x h_{\hat{\theta}}(x)^\top \left[ f(x) - \Delta(x)^\top \hat{\theta} + g(x) u\right]\right\} \geq - \alpha \left(h_{\hat{\theta}}(x) - \tfrac{1}{2\gamma} \tilde{\vartheta}^\top \tilde{\vartheta}\right) \tag{\ref{eq:rucbfu}}
\end{equation}
renders $\mathcal{C}_{\hat{\theta}}$ safe with the \emph{composite adaptation law}
\begin{subequations}
\label{eq:comp_adapt}
    \begin{align}
        \dot{\hat{\theta}} &= \gamma v(\rho) \Delta(x) \nabla_x h_{\hat{\theta}}(x) - \beta \Delta(x) \varepsilon_{\hat{\theta}} \label{eq:theta_comp} \\
        \dot{\rho} &= - \frac{v(\rho)}{\nabla v(\rho)} \frac{1}{h_{\hat{\theta}}(x)+\eta} \nabla_{\hat{\theta}} h_{\hat{\theta}}(x)^\top  \dot{\hat{\theta}} \label{eq:rho_comp}
    \end{align}
\end{subequations}
where $\gamma$ is an admissible adaptation gain, $v(\rho)$ is scaling function, $\beta \in \mathbb{R}_{>0}$ is the model estimation gain, $\varepsilon_{\hat{\theta}}$ is the state predictor, and $\eta \in \mathbb{R}_{>0}$ is a design parameter.
\end{theorem}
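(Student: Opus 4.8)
The plan is to mirror the proof of \cref{thm:direct} almost verbatim; the only new ingredient is the state-predictor identity $\varepsilon_{\hat{\theta}} = \Delta(x)^\top \tilde{\theta}$, and the upshot will be that the model-estimator term enters the safety certificate with a favorable sign. First I would reuse the same barrier-like function $h(t) = v(\rho)\big(h_{\hat{\theta}}(x)+\eta\big) - \tfrac{1}{2\gamma}\tilde{\theta}^\top\tilde{\theta}$ with $\eta>0$, and note that $h(t)\geq v(\rho)\eta>0$ for all $t$ again implies $h_{\hat{\theta}}(x(t))\geq 0$, i.e.\ forward invariance of $\mathcal{C}_{\hat{\theta}}$, since $v(\rho)\eta \leq h(t) \leq v(\rho)\big(h_{\hat{\theta}}(x)+\eta\big)$.

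Next I would differentiate $h(t)$ along \cref{eq:dyn} and substitute the composite law \cref{eq:comp_adapt}. As in \cref{thm:direct}, the definition of $\dot{\rho}$ in \cref{eq:rho_comp} is engineered so that $\dot{\rho}\,\nabla v(\rho)\,(h_{\hat{\theta}}(x)+\eta) = -v(\rho)\,\nabla_{\hat{\theta}}h_{\hat{\theta}}(x)^\top\dot{\hat{\theta}}$ for the \emph{composite} $\dot{\hat{\theta}}$, so the parameter-transient term cancels. Writing $\theta = \hat{\theta}-\tilde{\theta}$ in the Lie derivative then leaves $v(\rho)\nabla_x h_{\hat{\theta}}(x)^\top[f(x)-\Delta(x)^\top\hat{\theta}+g(x)u] + v(\rho)\nabla_x h_{\hat{\theta}}(x)^\top\Delta(x)^\top\tilde{\theta} - \tfrac{1}{\gamma}\tilde{\theta}^\top\dot{\hat{\theta}}$. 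The key computation is expanding $-\tfrac{1}{\gamma}\tilde{\theta}^\top\dot{\hat{\theta}}$ with \cref{eq:theta_comp}: the direct part $\gamma v(\rho)\Delta(x)\nabla_x h_{\hat{\theta}}(x)$ contributes exactly $-v(\rho)\nabla_x h_{\hat{\theta}}(x)^\top\Delta(x)^\top\tilde{\theta}$, which cancels the sign-indefinite term, while the estimator part $-\beta\Delta(x)\varepsilon_{\hat{\theta}}$ contributes $+\tfrac{\beta}{\gamma}\tilde{\theta}^\top\Delta(x)\varepsilon_{\hat{\theta}} = \tfrac{\beta}{\gamma}\lVert\Delta(x)^\top\tilde{\theta}\rVert^2 \geq 0$, using $\varepsilon_{\hat{\theta}}=\Delta(x)^\top\tilde{\theta}$. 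Hence $\dot{h}(t) = v(\rho)\nabla_x h_{\hat{\theta}}(x)^\top[f(x)-\Delta(x)^\top\hat{\theta}+g(x)u] + \tfrac{\beta}{\gamma}\lVert\Delta(x)^\top\tilde{\theta}\rVert^2$, and applying \cref{eq:rucbfu} together with $c\alpha(r)\leq\alpha(cr)$ for $c\geq 1$ (and dropping the nonnegative composite term) yields $\dot{h}(t)\geq -\alpha\big(v(\rho)h_{\hat{\theta}}(x) - v(\rho)\tfrac{1}{2\gamma}\tilde{\vartheta}^\top\tilde{\vartheta}\big)$, which is precisely the inequality from the proof of \cref{thm:direct}.

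From here the argument closes exactly as in \cref{thm:direct}: using $|\tilde{\theta}|\leq\tilde{\vartheta}$ and $v(\rho)\geq 1$ one bounds $h(t)\geq v(\rho)\big(h_{\hat{\theta}}(x)+\eta-\tfrac{1}{2\gamma}\tilde{\vartheta}^\top\tilde{\vartheta}\big)$, so $\dot{h}(t)\geq -\alpha(h(t)-v(\rho)\eta)$, and a comparison argument gives $h(t)\geq v(\rho)\eta>0$ whenever $h(0)\geq v(\rho(0))\eta$, hence $\mathcal{C}_{\hat{\theta}}$ is forward invariant and the system is safe. I do not expect a genuine obstacle here; the one point to handle carefully is the bookkeeping showing the estimator term collapses to the \emph{nonnegative} quantity $\tfrac{\beta}{\gamma}\lVert\Delta(x)^\top\tilde{\theta}\rVert^2$ rather than something sign-indefinite (this is exactly where $\varepsilon_{\hat{\theta}}=\Delta(x)^\top\tilde{\theta}$ is invoked), together with the implicit standing assumption that a solution exists with $x,\hat{\theta}$ (and therefore $\varepsilon_{\hat{\theta}}$, via measured or filtered velocities) bounded on the interval considered. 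As with \cref{thm:direct}, boundedness of $\rho$ and $v(\rho)$ is a separate issue addressed by the modifications discussed after that theorem.
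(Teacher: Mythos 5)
Your proposal is correct and follows essentially the same route as the paper's proof: reuse the barrier-like function from \cref{thm:direct}, observe that the estimator term enters $\dot{h}$ as $\tfrac{\beta}{\gamma}\tilde{\theta}^\top\Delta(x)\Delta(x)^\top\tilde{\theta}\geq 0$ via $\varepsilon_{\hat{\theta}}=\Delta(x)^\top\tilde{\theta}$, and drop it to recover the inequality $\dot{h}(t)\geq-\alpha(h(t)-v(\rho)\eta)$. Your write-up simply spells out the cancellation bookkeeping that the paper leaves implicit.
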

\begin{proof}
Follows similarly to \cref{thm:direct}.
Differentiating $h(t) = v(\rho)(h_{\hat{\theta}}(x) + \eta) - \tfrac{1}{2\gamma}\tilde{\theta}^\top \tilde{\theta}$ and applying \cref{eq:comp_adapt,eq:rucbfu} yields,
\begin{equation*}
    \begin{aligned}
        \dot{h}(t) & \geq - \alpha(h(t) - v(\rho)\eta) + \tfrac{\beta}{\gamma} \tilde{\theta}^\top \Delta(x) \varepsilon_{\hat{\theta}} \\
        & = - \alpha(h(t) - v(\rho)\eta) + \tfrac{\beta}{\gamma} \tilde{\theta}^\top \Delta(x) \Delta(x)^\top \tilde{\theta} \\
        & \geq - \alpha(h(t) - v(\rho)\eta),
    \end{aligned}
\end{equation*}
which is the same inequality obtained in \cref{thm:direct}. Therefore, the controller \cref{eq:rucbfu} and composite adaptation law \cref{eq:comp_adapt} render the set $\mathcal{C}_{\hat{\theta}}$ forward invariant.  
\end{proof}

\subsection{Data-Driven Safety}

One of the key ideas discussed in \cite{lopez2020robust} is the benefit of using a history of data, i.e., life-long model estimation, to reduce the conservatism of adaptive safety controllers.  
Conceptually, the controller in \cref{eq:ucbfu} is trying to render the tightened set  $\mathcal{C}^r_{\hat{\theta}}$  --- not the actual safe set  $\mathcal{C}_{\hat{\theta}}$  --- invariant leading to conservatism. 
Reducing the parameter estimation error bounds  $\tilde{\vartheta}$  via least squares, set membership identification, concurrent learning, Bayesian estimation, etc.~can significantly improve the performance of the closed-loop system since  $\mathcal{C}^r_{\hat{\theta}} \rightarrow \mathcal{C}_{\hat{\theta}}$  as  $\tilde{\vartheta}\rightarrow 0$.
We will show the benefits of life-long model estimation is also applicable to the uCBF adaptive safety framework.
First, we establish a useful lemma.

\begin{lemma}
\label{lemma:tight}
Let $\mathcal{C}^\sigma_\theta$ be a $\sigma$-superlevel set for a continuously differentiable function $h_\theta:\mathbb{R}^n\times \mathbb{R}^p\rightarrow\mathbb{R}$, i.e., $\mathcal{C}^\sigma_\theta \triangleq \{ x \in \mathbb{R}^n,  \theta \in \Theta : h_\theta(x) \geq \sigma\}$ where $\sigma \geq 0$.
If $h_\theta(x)$ is an uCBF on $\mathcal{C}^\sigma_\theta$ then it is also an uCBF on $\mathcal{C}_\theta \supseteq \mathcal{C}^\sigma_\theta$ where $\mathcal{C}_\theta$ is the 0-superlevel set of $h_\theta(x)$.
\end{lemma}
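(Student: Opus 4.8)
The plan is to reuse the controller and comparison function furnished by the hypothesis and check that the uCBF inequality \cref{eq:ucbfu} persists after enlarging the working set. The first observation is that \cref{eq:ucbfu} never mentions $\sigma$: it involves only $h_\theta$, its gradient, and the actual dynamics \cref{eq:dyn}. Since $\sigma \ge 0$ we have $\mathcal{C}^\sigma_\theta = \{x : h_\theta(x) \ge \sigma\} \subseteq \{x : h_\theta(x) \ge 0\} = \mathcal{C}_\theta$, so the hypothesis already supplies, for every $\theta \in \Theta$, a controller $u \in \mathbb{U}$ and a single extended class $\mathcal{K}_\infty$ function $\alpha$ for which \cref{eq:ucbfu} holds at every $x \in \mathcal{C}^\sigma_\theta$. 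The only points at which something new must be verified therefore lie in the collar $\mathcal{S}_\theta \triangleq \{x \in \mathbb{R}^n : 0 \le h_\theta(x) \le \sigma\}$; if $\sigma = 0$ this collar is $\partial\mathcal{C}_\theta$ and there is nothing to prove, so take $\sigma > 0$.

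On $\mathcal{S}_\theta$ I would proceed as follows. Let $\Phi_\theta(x) \triangleq \sup_{u \in \mathbb{U}} \{ \nabla_x h_\theta(x)^\top[f(x) - \Delta(x)^\top\theta + g(x)u] \}$, which is continuous in $(x,\theta)$ since $h_\theta$ is continuously differentiable and $f,\Delta,g$ are continuous. At $\{h_\theta = \sigma\}$ the hypothesis gives $\Phi_\theta \ge -\alpha(\sigma)$, while at $\{h_\theta = 0\}$ the mild regularity already implicit in treating $h_\theta$ as a family of barrier functions — safety is not obstructed on the zero level set — gives $\Phi_\theta \ge 0$; combined with compactness of $\Theta$ (\cref{assumption:params}) and the usual working assumption that the region of interest in $x$ is compact, this makes $-M \triangleq \inf\{\Phi_\theta(x) : x \in \mathcal{S}_\theta,\, \theta \in \Theta\}$ finite and, where it is negative, attained only at points with $h_\theta(x)$ bounded away from $0$. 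I then replace $\alpha$ by any extended class $\mathcal{K}_\infty$ function $\tilde\alpha \ge \alpha$ that additionally satisfies $\tilde\alpha(r) \ge M$ throughout the compact range of collar values of $h_\theta$ on which $\Phi_\theta$ can dip below zero; such a $\tilde\alpha$ exists because the property $c\,\alpha(r) \le \alpha(cr)$ for $c \ge 1$ (the remark following \cref{def:ucbf}) lets one inflate $\alpha$ past any finite threshold on a compact interval while keeping it extended class $\mathcal{K}_\infty$. With this $\tilde\alpha$, \cref{eq:ucbfu} holds at every $x \in \mathcal{C}_\theta$: it is untouched on $\mathcal{C}^\sigma_\theta$ and holds by construction on $\mathcal{S}_\theta$. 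Since $\tilde\alpha$ and the controller can be taken uniform in $\theta \in \Theta$, $h_\theta$ is a uCBF on $\mathcal{C}_\theta$.

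The only genuine obstacle is the collar $\{0 \le h_\theta < \sigma\}$, on which the uCBF hypothesis for $\mathcal{C}^\sigma_\theta$ says nothing directly; the argument must import just enough structure — continuity of $\Phi_\theta$, compactness of $\Theta$ and of the working region, and non-obstruction of safety at $\partial\mathcal{C}_\theta$ — to bound $\Phi_\theta$ from below and keep it non-negative where $h_\theta$ is small, after which the rescaling of $\alpha$ closes the gap. Should one instead read the uCBF condition as required on a fixed open domain containing $\mathcal{C}_\theta$ (e.g.\ all of $\mathbb{R}^n$), the lemma is immediate, since \cref{eq:ucbfu} is then literally the same requirement for $\mathcal{C}^\sigma_\theta$ and for $\mathcal{C}_\theta$.
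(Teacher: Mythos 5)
Your proof is built on a different (and, for this paper, incorrect) reading of the hypothesis ``$h_\theta(x)$ is an uCBF on $\mathcal{C}^\sigma_\theta$.'' You take it to mean that the unshifted inequality \cref{eq:ucbfu} holds only at points $x$ with $h_\theta(x)\geq\sigma$, which creates the ``collar'' $\{0\leq h_\theta<\sigma\}$ on which nothing is known. The paper's convention --- visible in how the phrase is used in \cref{thm:direct} and made explicit in its own proof of \cref{lemma:tight} --- is that being a uCBF on the $\sigma$-superlevel set means the barrier inequality holds with the level shift built into the class $\mathcal{K}_\infty$ function, i.e.\ there exist $u\in\mathbb{U}$ and $\alpha$ with $\dot{h}_\theta(x)\geq-\alpha(h_\theta(x)-\sigma)$. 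Under that reading the lemma is a one-line monotonicity argument: since $\sigma\geq0$ and $\alpha$ is strictly increasing, $-\alpha(h_\theta(x)-\sigma)\geq-\alpha(h_\theta(x))$, so the same controller and the same $\alpha$ certify \cref{eq:ucbfu} on the larger set $\mathcal{C}_\theta$. The condition on the tightened set is simply \emph{stronger} than the one on $\mathcal{C}_\theta$; there is no gap to bridge.

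Beyond the misreading, the patch you construct for the collar does not stand on the lemma's hypotheses. You assert $\Phi_\theta\geq0$ on $\{h_\theta=0\}$ from ``mild regularity already implicit'' --- this is exactly the conclusion being sought, not an available assumption --- and you additionally assume compactness of the working region in $x$, neither of which appears in the statement of \cref{lemma:tight}. Even granting both, the claim that the negative values of $\Phi_\theta$ are attained only at points with $h_\theta$ bounded away from zero does not follow: continuity plus $\Phi_\theta\geq0$ on the zero level set still permits $\Phi_\theta<0$ at points with $h_\theta$ arbitrarily small, in which case no uniform threshold $M$ exists to inflate $\alpha$ past on a compact interval bounded away from the origin, and inflating $\tilde{\alpha}$ above a positive constant near $r=0$ is incompatible with $\tilde{\alpha}(0)=0$. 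Your closing remark --- that the lemma is immediate if the uCBF condition is read as holding on a common domain --- is the right instinct; the paper's shifted-$\alpha$ convention is what makes that immediacy rigorous.
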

\begin{proof}
If $h_\theta(x)$ is an uCBF on $\mathcal{C}^\sigma_\theta$ then there exists a controller $u$ and extended class $\mathcal{K}_\infty$ function $\alpha(\cdot)$ such that $\dot{h}_\theta(x) \geq -\alpha(h_\theta(x) - \sigma)$. 
Since $\sigma \geq 0$ then $\dot{h}_\theta(x) \geq -\alpha(h_\theta(x))$.
Therefore, $h_\theta(x)$ is also a valid uCBF on $\mathcal{C}_\theta$.
\end{proof}

\begin{theorem}
\label{thm:data}
Let $\mathcal{C}_{\hat{\theta}}$ be a 0-superlevel set of a continuously differentiable function $h_{\hat{\theta}}:\mathbb{R}^n \times \mathbb{R}^p\rightarrow\mathbb{R}$. 
If the model uncertainty $\tilde{\vartheta}(t)$ monotonically decreases via a suitable model estimator and
$h_{\hat{\theta}}(x)$  is an uCBF on $\mathcal{C}^r_{\hat{\theta}} \triangleq \{x \in \mathbb{R}^n,  \hat{\theta} \in \Theta(0) : h_{\hat{\theta}}(x) \geq \tfrac{1}{2\gamma} \tilde{\vartheta}(0)^\top \tilde{\vartheta}(0)\} \subseteq \mathcal{C}_{\hat{\theta}}$ then any locally Lipschitz continuous controller satisfying
\begin{equation}
\label{eq:rucbfut}
    \underset{u  \in  \mathbb{U}}{\mathrm{sup}}~ \left\{ \nabla_x h_{\hat{\theta}}(x)^\top \left[ f(x) - \Delta(x)^\top \hat{\theta} + g(x) u\right]\right\} \geq - \alpha\left(h_{\hat{\theta}}(x) - \tfrac{1}{2\gamma} \tilde{\vartheta}(t)^\top \tilde{\vartheta}(t)\right).
\end{equation}
renders $\mathcal{C}_{\hat{\theta}}$ safe with \cref{eq:adapt} or \cref{eq:comp_adapt} and the suitable model estimator.
\end{theorem}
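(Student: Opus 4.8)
The plan is to re-run the argument of \cref{thm:direct} (or \cref{thm:composite}) essentially verbatim, with the single change that the constant uncertainty bound $\tilde\vartheta$ is everywhere replaced by the time-varying bound $\tilde\vartheta(t)$ supplied by the estimator. Two things then have to be checked: that the controller constraint \cref{eq:rucbfut} stays feasible for all $t$ despite the shrinking offset, and that the comparison-lemma step still delivers $h_{\hat\theta}(x(t)) \ge 0$. I would also make explicit that the estimator is \emph{valid}, i.e.\ it keeps $\theta \in \Theta(t)$ for all $t$, so that $\tilde\theta(t)$ lies in the current set $\tilde\Theta(t)$ and hence $\tilde\theta(t)^\top\tilde\theta(t) \le \tilde\vartheta(t)^\top\tilde\vartheta(t)$ with the \emph{current} (tighter) bound.

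First I would note that monotonic decrease gives $\tfrac{1}{2\gamma}\tilde\vartheta(t)^\top\tilde\vartheta(t) \le \tfrac{1}{2\gamma}\tilde\vartheta(0)^\top\tilde\vartheta(0)$ for all $t \ge 0$; combined with the one-line argument behind \cref{lemma:tight} (raising an offset, as long as it stays in $[0,\tfrac{1}{2\gamma}\tilde\vartheta(0)^\top\tilde\vartheta(0)]$, only relaxes the requirement on $h_{\hat\theta}$), the hypothesis that $h_{\hat\theta}$ is an uCBF on $\mathcal{C}^r_{\hat\theta}$ built from $\tilde\vartheta(0)$ guarantees that some feasible $u \in \mathbb{U}$ satisfies \cref{eq:rucbfut} with the instantaneous $\tilde\vartheta(t)$ at every time. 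Next I would differentiate the same barrier-like function $h(t) = v(\rho)\big(h_{\hat\theta}(x)+\eta\big) - \tfrac{1}{2\gamma}\tilde\theta^\top\tilde\theta$ used before; the $\dot{\hat\theta}$- and $\dot\rho$-terms cancel exactly as in \cref{thm:direct} (with \cref{eq:comp_adapt} producing the additional nonnegative term $\tfrac{\beta}{\gamma}\|\Delta(x)^\top\tilde\theta\|^2$, as in \cref{thm:composite}), and \cref{eq:rucbfut} together with $c\alpha(r)\le\alpha(cr)$ for $c\ge1$ gives $\dot h(t) \ge -\alpha\big(v(\rho)h_{\hat\theta}(x) - v(\rho)\tfrac{1}{2\gamma}\tilde\vartheta(t)^\top\tilde\vartheta(t)\big)$. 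Crucially $\tilde\vartheta(t)$ never enters $h(t)$ itself — only $\tilde\theta$ does — so no $\dot{\tilde\vartheta}$ term appears and $\tilde\vartheta(\cdot)$ need not even be continuous; any downward jump in $\tilde\vartheta(t)$ only makes \cref{eq:rucbfut} easier while leaving $h(t)$ continuous. Using $\tilde\theta^\top\tilde\theta \le \tilde\vartheta(t)^\top\tilde\vartheta(t)$ and $v(\rho)\ge1$ I obtain $h(t) \ge v(\rho)\big(h_{\hat\theta}(x)+\eta-\tfrac{1}{2\gamma}\tilde\vartheta(t)^\top\tilde\vartheta(t)\big)$, hence $\dot h(t) \ge -\alpha\big(h(t)-v(\rho)\eta\big)$; taking $v(\rho_0)=1$, the hypothesis $x_0\in\mathcal{C}^r_{\hat\theta}$ (equivalently the admissibility bound on $\gamma$, cf.\ \cref{eq:rate}) yields $h(0)\ge\eta$, so the comparison lemma gives $h(t)\ge v(\rho)\eta>0$ and therefore $h_{\hat\theta}(x(t))\ge0$ for all $t$.

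The main obstacle is not in this chain of inequalities but in reconciling the estimator with the adaptation laws \cref{eq:adapt}/\cref{eq:comp_adapt}: a shrinking set $\Theta(t)$ typically forces a projection (or re-centering) of $\hat\theta$ onto $\Theta(t)$ at the instants $\tilde\vartheta(t)$ decreases, and I must verify that such an update (i) keeps $\hat\theta\in\Theta(t)$, so the bound $\tilde\theta^\top\tilde\theta\le\tilde\vartheta(t)^\top\tilde\vartheta(t)$ survives, and (ii) does not increase $\|\tilde\theta\|$, so $h(t)$ does not drop. Both follow because projection onto a closed convex set containing $\theta$ is non-expansive with respect to $\theta$, which is precisely what ``suitable model estimator'' is meant to encode for the standard choices (recursive least squares, set-membership identification, concurrent learning); with that granted, $h(t)$ is piecewise-differentiable with only upward jumps and the invariance argument above carries through, the conservatism reduction being visible in that \cref{eq:rucbfut} is now imposed on the expanding family $\mathcal{C}^{\sigma(t)}_{\hat\theta}$, with $\sigma(t)\downarrow$, rather than on the fixed $\mathcal{C}^r_{\hat\theta}$.
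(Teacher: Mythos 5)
Your proposal is correct and follows essentially the same route as the paper: monotonic decrease of $\tilde{\vartheta}(t)$ nests the tightened sets, \cref{lemma:tight} transfers the uCBF property to the enlarged sets, and the comparison argument of \cref{thm:direct}/\cref{thm:composite} is rerun with $\tilde{\theta}^\top\tilde{\theta} \le \tilde{\vartheta}(t)^\top\tilde{\vartheta}(t)$ in place of the static bound. Your additional observations --- that $\tilde{\vartheta}(t)$ never appears in the barrier-like function $h(t)$ and that a valid estimator must keep the projection of $\hat{\theta}$ non-expansive so $h(t)$ cannot jump downward --- are details the paper leaves implicit, and they strengthen rather than alter the argument.
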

\begin{proof}
Let $\mathcal{C}^r_{\hat{\theta}}(t) \triangleq \{ x \in \mathbb{R}^n,  \hat{\theta} \in \Theta(t) : h_{\hat{\theta}}(x) \geq \tfrac{1}{2\gamma}\tilde{\vartheta}(t)^\top \tilde{\vartheta}(t)\}$.
If the model uncertainty monotonically decreases then  $\tilde{\vartheta}(t) \leq \tilde{\vartheta}(0)$ so  $\mathcal{C}^r_{\hat{\theta}} = \mathcal{C}^r_{\hat{\theta}}(0) \subseteq  \mathcal{C}^r_{\hat{\theta}}(t)$  for all $t>0$. 
If  $h_{\hat{\theta}}(x)$  is an uCBF on  $\mathcal{C}^r_{\hat{\theta}}$  then it is also an uCBF on  $\mathcal{C}^r_{\hat{\theta}}(t)$  by \cref{lemma:tight}.
Forward invariance of  $\mathcal{C}_{\hat{\theta}}$  then follows the arguments in \cref{thm:direct} or \cref{thm:composite}
\end{proof}

\subsection{Safe Tracking Control}
An unmatched CBF can be immediately combined with an unmatched CLF \cite{lopez2021universal} for safe stabilizing controller  $\kappa: \mathbb{R}^n \times \mathbb{R}^p \times \mathbb{R}^p \rightarrow \mathbb{U}$  that depends on parameter estimates  $\hat{\phi}$  and  $\hat{\theta}$  computed for tracking and safety, respectively. 
The pointwise min-norm controller can be compute by solving the well-known quadratic program
\begin{align*}
    \kappa\left(x,\hat{\phi},\hat{\theta}\right)   =  & ~\underset{u  \in  \mathbb{U}}{\argmin} ~ \frac{1}{2} u^\top u + r \delta^2 \\
    &\mathrm{s.t.} ~ ~\nabla_x V_{\hat{\phi}}(x)^\top \left[ f(x) - \Delta(x)^\top \hat{\phi} + g(x)u \right] \leq - Q_{\hat{\phi}}(x)+ \delta \nonumber  \\
    & \hphantom{s.t.} ~ ~ \nabla_x h_{\hat{\theta}}(x)^\top \left[ f(x) - \Delta(x)^\top \hat{\theta} + g(x) u\right] \geq - \alpha\left(h_{\hat{\theta}}(x) - \tfrac{1}{2\gamma} \tilde{\vartheta}^\top \tilde{\vartheta}\right) \nonumber
\end{align*}
where
\begin{align*}
    \dot{\hat{\phi}} & = - \gamma_c v_c(\varrho)  \Delta(x) \nabla_x V_{\hat{\phi}}(x), \\ 
    \dot{\varrho} & = - \frac{v_c(\varrho)}{\nabla v_c(\varrho)} \frac{1}{V_{\hat{\phi}}(x) + \eta_c} \nabla_{\hat{\phi}} V_{\hat{\phi}}(x)^\top  \dot{\hat{\phi}},
\end{align*}
and 
\begin{align*}
    \dot{\hat{\theta}} &= \gamma_b v_b(\rho) \Delta(x) \nabla_x h_{\hat{\theta}}(x), \\
    \dot{\rho} &= - \frac{v_b(\rho)}{\nabla v_b(\rho)} \frac{1}{h_{\hat{\theta}}(x)+\eta_b} \nabla_{\hat{\theta}} h_{\hat{\theta}}(x)^\top  \dot{\hat{\theta}}.
\end{align*}
Note any of the modifications stated previously can be applied to both the tracking and safety adaptation laws.
One could also formulate a quadratic program for a system that already has a well-designed tracking controller \cite{ames2019control}.

\section{Extension: Adaptive Safety with High Relative Degree Constraints}
The unmatched control barrier function presented in \cref{def:ucbf} can be extended to safety constraints that are more than one derivative away from the input, i.e., those with a relative degree greater than one.
Several works have addressed high relative degree constraints via input-output linearization with known system dynamics \cite{nguyen2016exponential,xiao2021high,tan2021high}.
Recently, \cite{cohen2022high} extended \cite{taylor2020adaptive,lopez2020robust} to high relative degree control barrier functions to uncertain systems, but only for those with uncertainties that satisfy the so-called matching condition.
This section will show that the results in \cref{sec:results} also apply to model-dependent safety constraints with an arbitrary relative degree. 

Now let $\mathcal{C}_\theta$ be a family of 0-superlevel sets parameterized by $\theta$ for a continuously differentiable function $h_\theta:\mathbb{R}^n\times \mathbb{R}^p\rightarrow\mathbb{R}$ which has a well-defined relative degree of $r_b$.
Using the shorthand notation $F_{\theta}(x) \triangleq f_{\theta}(x) - \Delta(x)^\top \theta$, differentiating $h_{\theta}(x)$ until $u$ appears yields ${h}^{(r_b)}_{\theta}(x) = L_{F_\theta}^{r_b} h_{\theta}(x) + L_g L_{F_\theta}^{r_b-1}h_{\theta}(x) u$.
The input-output dynamics were obtained by treating $\theta$ as if it were known in accordance to the certainty equivalence design philosophy. 
Conversely, if $\theta$ were treated as unknown and replaced by $\hat{\theta}$ then higher order derivatives of $\hat{\theta}$ would appear in ${h}^{(r_b)}_{\theta}(x)$ making the design of an adaptive safety controller substantially more difficult.

Existing works \cite{nguyen2016exponential,xiao2021high,tan2021high} use pole placement to stabilize the input-output dynamics yielding forward invariance for known systems.
The work by \cite{cohen2022high} uses the pole placement approach developed in \cite{xiao2021high} for systems with matched uncertainties.
We instead employ a sliding variable $s_\theta(x) \triangleq h^{(r_b-1)}_{\theta}(x) + \phi(h_{\theta}(x),\dots,h^{(r_b-2)}_{\theta}(x))$, which can be viewed as an input into contracting \cite{lohmiller1998contraction} dynamics $\phi : \mathbb{R} \times \dots \times \mathbb{R} \rightarrow \mathbb{R}$.
The sliding variable approach can be viewed as a generalization of the pole placement technique as the contracting dynamics can be designed to have eigenvalues that are a function of $h_{\theta}(x)$, i.e., $\lambda = \lambda(h_\theta(x))$, which can improve time response characteristics.
Examples of suitable sliding variables for $r_b=2$ are $s_{\theta} = \dot{h}_{\theta}(x) + \lambda_1 h_{\theta}(x)$ and $s_{\theta} = \dot{h}_{\theta}(x) + (\lambda_1 + \lambda_2 |h_{\theta}(x)|^q) h_{\theta}(x)$ with $\lambda_i,\,q \in \mathbb{R}_{>0}$, where the latter has a state-dependent eigenvalue $\lambda_1 + \lambda_2 |h_{\theta}(x)|^q$.

\begin{definition}[High-Order Unmatched Control Barrier Function]
\label{def:houcbf}
Let $\mathcal{C}_\theta$ be a family of 0-superlevel sets parameterized by $\theta$ for a continuously differentiable function $h_\theta:\mathbb{R}^n\times \mathbb{R}^p\rightarrow\mathbb{R}$ which has a relative degree of $r_b$.
Additionally, let $s_\theta(x) = h^{(r_b-1)}_{\theta}(x) + \phi(h_{\theta}(x),\dots,h^{(r_b-2)}_{\theta}(x))$ be the input to contracting dynamics given by $\phi : \mathbb{R} \times \dots \times \mathbb{R} \rightarrow \mathbb{R}$.
The function $h_\theta(x)$ is a \emph{high-order unmatched control barrier function} (HOuCBF) if there exists a controller $u \in \mathbb{U}$ and extended class $\mathcal{K}_\infty$ function $\alpha(\cdot)$ such that for every $\theta\in\Theta$

\begin{equation}
\label{eq:houcbfu}
    \underset{u  \in  \mathbb{U}}{\text{sup}}~ \left\{ \nabla_x s_\theta(x)^\top \left[ f(x) - \Delta(x)^\top \theta + g(x) u\right]\right\} \geq - \alpha(s_\theta(x)).
\end{equation}
\end{definition}

Fundamentally, \cref{def:houcbf} states there exists a controller $u$ and extended class $\mathcal{K}_{\infty}$ function $\alpha(\cdot)$ such that $s_{\theta}(x(t)) \geq 0$ uniformly.
The implication $s_{\theta}(x(t)) \geq 0 \implies h_{\theta}(x(t)) \geq 0$ for all $t\geq 0$ can be established by requiring $h^{(i)}_\theta(x_0) \in \mathcal{C}^i_\theta$ for $i=0,\dots,r_b-1$ where $\mathcal{C}^i_\theta$ depends on the choice of contracting dynamics $\phi(\cdot)$.
For example, if $\phi(\cdot)$ has the repeated (constant) eigenvalue $\lambda$ then $\mathcal{C}^i_{\hat{\theta}} = \{ x_0 \in\mathbb{R}^n,\,\theta \in \Theta : \left(\tfrac{d}{dt} + \lambda\right)^i h_{\theta}(x)|_{x_0} \geq 0 \}$ which results in $s_{\theta}(x(t)) \geq 0 \implies h_{\theta}(x(t)) \geq 0$ for all $t\geq 0$.
The following theorem shows a safe set with a high relative degree can be rendered forward invariant using an HOuCBF and direct adaptive control.

\begin{theorem}
\label{thm:hodirect}
Let $\mathcal{C}_{\hat{\theta}}$ be a 0-superlevel set of a continuously differentiable function $h_{\hat{\theta}}:\mathbb{R}^n \times \mathbb{R}^p\rightarrow\mathbb{R}$.
If $h_{\hat{\theta}}(x)$ is a HOuCBF on $\mathcal{C}^r_{\hat{\theta}} \triangleq \{x \in \mathbb{R}^n,  \hat{\theta} \in \Theta : h_{\hat{\theta}}(x) \geq \tfrac{1}{2\gamma} \tilde{\vartheta}^\top \tilde{\vartheta}\} \subseteq \mathcal{C}_{\hat{\theta}}$ with sliding variable $s_{\hat{\theta}}(x)$ such that $s_{\hat{\theta}}(x(t)) \geq 0 \implies h_{\hat{\theta}}(x(t)) \geq 0$ uniformly with $h^{(i)}_{\hat{\theta}}(x_0) \in \mathcal{C}^i_{\hat{\theta}}$ for $i=0,\dots,r_b-1$, then any locally Lipschitz continuous controller satisfying
\begin{equation}
\label{eq:horucbfu}
    \underset{u  \in  \mathbb{U}}{\mathrm{sup}}~ \left\{ \nabla_x s_{\hat{\theta}}(x)^\top \left[ f(x) - \Delta(x)^\top \hat{\theta} + g(x) u\right]\right\} \geq - \alpha\left(s_{\hat{\theta}}(x) - \tfrac{1}{2\gamma} \tilde{\vartheta}^\top \tilde{\vartheta}\right).
\end{equation}
renders $\mathcal{C}_{\hat{\theta}}$ safe with the adaptation law
\begin{subequations}
\label{eq:hoadapt}
    \begin{align}
        \dot{\hat{\theta}} &= \gamma v(\rho) \Delta(x) \nabla_x s_{\hat{\theta}}(x) \label{eq:ho_theta_direct} \\
        \dot{\rho} &= - \frac{v(\rho)}{\nabla v(\rho)} \frac{1}{s_{\hat{\theta}}(x)+\eta} \nabla_{\hat{\theta}} s_{\hat{\theta}}(x)^\top  \dot{\hat{\theta}} \label{eq:ho_rho_direct}
    \end{align}
\end{subequations}
where $\gamma$ is an admissible adaptation gain, $v(\rho)$ is a scaling function, and $\eta \in \mathbb{R}_{>0}$ is a design parameter.
\end{theorem}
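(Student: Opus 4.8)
The plan is to run the argument of \cref{thm:direct} essentially verbatim, with the sliding variable $s_{\hat{\theta}}(x)$ playing the role of the barrier function $h_{\hat{\theta}}(x)$, and to invoke the hypothesized implication $s_{\hat{\theta}}(x(t)) \geq 0 \implies h_{\hat{\theta}}(x(t)) \geq 0$ only at the very end. First I would introduce the barrier-like function
\[
    h(t) = v(\rho)\bigl(s_{\hat{\theta}}(x) + \eta\bigr) - \tfrac{1}{2\gamma}\tilde{\theta}^\top\tilde{\theta},\qquad \eta > 0,
\]
and observe that it is differentiable along closed-loop trajectories: since $h_{\hat{\theta}}$ has a well-defined relative degree $r_b$, the sliding variable $s_{\hat{\theta}}(x) = h^{(r_b-1)}_{\hat{\theta}}(x) + \phi(\cdot)$ is $C^1$ in $x$, and any controller satisfying \cref{eq:horucbfu} is locally Lipschitz, so a unique solution exists on a maximal interval. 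The goal is to prove $s_{\hat{\theta}}(x(t)) \geq 0$ for all $t$ on that interval.

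Differentiating $h(t)$ and substituting \cref{eq:hoadapt} produces exactly the three cancellations from \cref{thm:direct}. By construction of \cref{eq:ho_rho_direct}, $\dot{\rho}\,\nabla v(\rho)\,(s_{\hat{\theta}}(x)+\eta) = -v(\rho)\,\nabla_{\hat{\theta}}s_{\hat{\theta}}(x)^\top\dot{\hat{\theta}}$, which cancels the chain-rule term $+\,v(\rho)\nabla_{\hat{\theta}}s_{\hat{\theta}}(x)^\top\dot{\hat{\theta}}$. Writing $\theta = \hat{\theta} - \tilde{\theta}$ in the drift and using \cref{eq:ho_theta_direct} together with $\dot{\tilde{\theta}} = \dot{\hat{\theta}}$, the term $v(\rho)\nabla_x s_{\hat{\theta}}(x)^\top\Delta(x)^\top\tilde{\theta}$ cancels $-\tfrac{1}{\gamma}\tilde{\theta}^\top\dot{\hat{\theta}}$. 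What remains is $v(\rho)\nabla_x s_{\hat{\theta}}(x)^\top[f(x) - \Delta(x)^\top\hat{\theta} + g(x)u]$, which by \cref{eq:horucbfu} and the property $c\alpha(r) \leq \alpha(cr)$ for $c \geq 1$ (with $c = v(\rho) \geq 1$) is bounded below by $-\alpha\bigl(v(\rho)s_{\hat{\theta}}(x) - v(\rho)\tfrac{1}{2\gamma}\tilde{\vartheta}^\top\tilde{\vartheta}\bigr)$. Using $|\tilde{\theta}| \leq \tilde{\vartheta}$ and $v(\rho) \geq 1$ gives $h(t) \geq v(\rho)\bigl(s_{\hat{\theta}}(x) + \eta - \tfrac{1}{2\gamma}\tilde{\vartheta}^\top\tilde{\vartheta}\bigr)$, hence $\dot{h}(t) \geq -\alpha\bigl(h(t) - v(\rho)\eta\bigr)$, and the comparison argument of \cref{thm:direct} yields $h(t) \geq v(\rho)\eta > 0$ for all $t$ whenever $h(0) \geq v(\rho(0))\eta$ --- which is exactly what membership in the tightened set $\mathcal{C}^r_{\hat{\theta}}$ guarantees at $t_0$. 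Since $v(\rho)\eta \leq h(t) \leq v(\rho)(s_{\hat{\theta}}(x)+\eta)$, this forces $s_{\hat{\theta}}(x(t)) \geq 0$ for all $t$.

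Finally I would close the cascade. Combining $s_{\hat{\theta}}(x(t)) \geq 0$ with the initial-condition requirement $h^{(i)}_{\hat{\theta}}(x_0) \in \mathcal{C}^i_{\hat{\theta}}$ for $i = 0,\dots,r_b-1$ and the hypothesized implication $s_{\hat{\theta}}(x(t)) \geq 0 \implies h_{\hat{\theta}}(x(t)) \geq 0$ --- a property of the contracting cascade defined by $\phi(\cdot)$, e.g.\ the nested $\left(\tfrac{d}{dt}+\lambda\right)^i$ conditions in the constant-eigenvalue case or a comparison-lemma argument when the eigenvalues depend on $h_{\hat{\theta}}(x)$ --- upgrades $s_{\hat{\theta}}(x(t)) \geq 0$ to $x(t) \in \mathcal{C}_{\hat{\theta}}$ for all $t$; boundedness of $x$ and $\hat{\theta}$ on $\mathcal{C}_{\hat{\theta}}$ then extends the solution to $[t_0,\infty)$, establishing forward invariance of $\mathcal{C}_{\hat{\theta}}$. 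I do not expect a genuinely new obstacle relative to \cref{thm:direct}: the main things to check are that the $\dot{\rho}$-cancellation and the $\alpha$-scaling inequality go through verbatim with $s_{\hat{\theta}}$ in place of $h_{\hat{\theta}}$, that $s_{\hat{\theta}}(x)$ is smooth enough along trajectories for $h(t)$ to be differentiable, and a mild well-posedness point --- $\dot{\rho}$ contains the factor $1/(s_{\hat{\theta}}(x)+\eta)$, which stays finite precisely because $h(t) \geq v(\rho)\eta$ keeps $s_{\hat{\theta}}(x) \geq 0 > -\eta$ on the invariant set.
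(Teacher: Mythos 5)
Your proposal is correct and follows essentially the same route as the paper: the paper's proof likewise applies the argument of the direct adaptive safety theorem verbatim to the barrier-like function $h(t) = v(\rho)\left(s_{\hat{\theta}}(x) + \eta\right) - \tfrac{1}{2\gamma}\tilde{\theta}^\top\tilde{\theta}$, obtains $\dot{h}(t) \geq -\alpha\left(h(t) - v(\rho)\eta\right)$ and hence $s_{\hat{\theta}}(x(t)) \geq 0$, and then invokes the sliding-variable construction to conclude $h_{\hat{\theta}}(x(t)) \geq 0$. Your additional remarks on differentiability of $s_{\hat{\theta}}$, solution extension, and the boundedness of the $1/(s_{\hat{\theta}}(x)+\eta)$ factor are sensible technical points that the paper leaves implicit.
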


\begin{proof}
Follows nearly identical steps to \cref{thm:direct} using the barrier-like function $h(t) = v(\rho) \left(s_{\hat{\theta}}(x) + \eta \right) - \tfrac{1}{2\gamma} \tilde{\theta}^\top \tilde{\theta}$ where $\eta > 0$.
Differentiating $h(t)$ and applying \cref{eq:hoadapt} ultimately yields $\dot{h}(t) \geq - \alpha \left( h(t) - v(\rho)\eta \right)$ so  $h(t) \geq v(\rho) \eta > 0$  for all  $t \geq 0$  if $h(0) \geq v(\rho(0)) \eta$.
Since  $v(\rho) \eta \leq h(t) \leq v(\rho) \left( s_{\hat{\theta}}(x) + \eta\right)$, then  $s_{\hat{\theta}}(x(t)) \geq 0$  for all $t$.
By construction of the sliding variable $s_{\hat{\theta}}(x)$, since $s_{\hat{\theta}}(x(t)) \geq 0$  for all $t$ then $h_{\hat{\theta}}(x(t)) \geq 0$ uniformly.
Therefore, the controller \cref{eq:horucbfu} and direct adaptation law \cref{eq:hoadapt} make the set  $\mathcal{C}_{\hat{\theta}}$  forward invariant. 
\end{proof}

\section{Concluding Remarks}
A new adaptive safety framework was presented that permits the use of the certainty equivalence principle for systematic online selection of a controller that renders a safe set forward invariant despite the presence of unmatched parametric uncertainties. 
The safe combination of policy selection and direct parameter adaptation was achieved by online adjustment of the adaptation gain (inspired by \cite{lopez2021universal}).
The ability to employ the certainty equivalence principle significantly reduces the complexities associated with existing adaptive safety approaches without sacrificing strong theoretical guarantees.
Several modifications that build upon the developed direct adaptation law were shown to also guarantee forward invariance.
Numerous avenues for future work are of interest, many of which center around experimental verification and practical modifications for real-world deployment.
Investigating the viability of data-driven or learning-based techniques to systematically synthesize a family of barrier functions is also of interest.
While the certainty equivalence principle allows for parallelization of this process, there are several fundamental questions on scalability, certifiability, and generalizability when generating barrier (and Lyapunov) functions for uncertain high-dimensional systems.

\section{Appendix}

\begin{definition}[Bregman Divergence]
\label{def:bregman}
Let $\psi(\cdot)$ be a strictly convex, continuously differentiable function on a closed convex set. 
The \emph{Bregman divergence} associated with $\psi(\cdot)$ is given by 
\begin{equation}
\label{eq:bregman}
    \mathrm{d}_\psi ( y ~\|~ x) = \psi(y) - \psi(x) - (y-x)^\top \nabla \psi(x),
\end{equation}
Its time-derivative satisfies $ \ \dot{\mathrm{d}}_\psi (y~\|~x) = (x-y)^\top \nabla^2 \psi(x) \, \dot{x}$ .
\end{definition}

\begin{proposition}
\label{prop:asym_stable}
Let  $h^r_{\hat{\theta}}(x)$  be an RaCBF on the set  $\mathcal{C}^r_{\hat{\theta}} = \{ x \in \mathbb{R}^n,  \hat{\theta}\in\Theta: h^r_{\hat{\theta}}(x) \geq \tfrac{1}{2\gamma} \tilde{\vartheta}^\top \tilde{\vartheta} \}$.
The set $\mathcal{C}^r_{\hat{\theta}}$ is asymptotically with the controller and adaptation law in \cref{thm:racbf}. 
\end{proposition}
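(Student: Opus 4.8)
The plan is to reuse the barrier-like bookkeeping from the forward-invariance proofs in this paper (e.g.\ the proof of \cref{thm:direct}) but to push the resulting comparison inequality one step further, from invariance to attractivity. I would work with the RaCBF counterpart of the certificate used there, namely
\[
    W(t) \triangleq h^r_{\hat{\theta}}(x) - \tfrac{1}{2\gamma}\, \tilde{\theta}^\top \tilde{\theta},
\]
whose $0$-superlevel set contains $\mathcal{C}^r_{\hat{\theta}}$ because $\tilde{\theta}^\top\tilde{\theta} \le \tilde{\vartheta}^\top\tilde{\vartheta}$, and which becomes a distance-like certificate for $\mathcal{C}^r_{\hat{\theta}}$ once the $\tilde{\theta}$-dependent term is bounded by the constant $\tfrac{1}{2\gamma}\tilde{\vartheta}^\top\tilde{\vartheta}$. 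The objective is to establish both Lyapunov stability of $\mathcal{C}^r_{\hat{\theta}}$ and attractivity, which together give asymptotic stability.

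First I would differentiate $W$ along the closed loop generated by the controller \cref{eq:racbfu} and the adaptation law \cref{eq:radapt}. Writing $\dot{x} = f(x) - \Delta(x)^\top\hat{\theta} + \Delta(x)^\top\tilde{\theta} + g(x)u$ and substituting $\Lambda_{\hat{\theta}}(x) = \hat{\theta} - \gamma\nabla_{\hat{\theta}}h^r_{\hat{\theta}}(x)$, the two cross terms that are the hallmark of the RaCBF construction cancel: the $\nabla_x h^r_{\hat{\theta}}(x)^\top\Delta(x)^\top\tilde{\theta}$ contribution from $\dot{x}$ is annihilated by $-\tfrac{1}{\gamma}\tilde{\theta}^\top\dot{\hat{\theta}}$ via \cref{eq:radapt}, and the $-\gamma\nabla_x h^r_{\hat{\theta}}(x)^\top\Delta(x)^\top\nabla_{\hat{\theta}}h^r_{\hat{\theta}}(x)$ term hidden inside $-\Delta(x)^\top\hat{\theta}$ is annihilated by $\nabla_{\hat{\theta}}h^r_{\hat{\theta}}(x)^\top\dot{\hat{\theta}}$. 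What remains is $\dot{W} = \nabla_x h^r_{\hat{\theta}}(x)^\top\left[f(x) - \Delta(x)^\top\Lambda_{\hat{\theta}}(x) + g(x)u\right] \ge -\alpha\left(h^r_{\hat{\theta}}(x) - \tfrac{1}{2\gamma}\tilde{\vartheta}^\top\tilde{\vartheta}\right)$ by \cref{eq:racbfu}; and since $h^r_{\hat{\theta}}(x) - \tfrac{1}{2\gamma}\tilde{\vartheta}^\top\tilde{\vartheta} \le W$ with $\alpha$ monotone, this collapses to $\dot{W} \ge -\alpha(W)$. This is precisely the inequality that already yields forward invariance of the $0$-superlevel set of $W$, exactly as in the proofs of \cref{thm:direct} and \cref{thm:racbf}; the remaining content of the proposition is the attractivity statement.

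For attractivity I would apply the comparison lemma against the scalar reference $\dot{z} = -\alpha(z)$: from $z(0) = W(0) < 0$ one gets $z(t) \uparrow 0$, so $W(t) \ge z(t)$ forces $\liminf_{t} W(t) \ge 0$ and, uniformly, drives $W$ into $[0,\infty)$. Equivalently, taking the candidate $V \triangleq \max\{0,-W\}$ (or a smooth under-approximation) and noting that $\tilde{\alpha}(s) \triangleq -\alpha(-s)$ is again an extended class $\mathcal{K}_\infty$ function, the inequality $\dot{W} \ge -\alpha(W)$ gives $\dot{V} \le -\tilde{\alpha}(V)$ off the set, which delivers both a $\mathcal{KL}$-type decay estimate and the stability bound $V(t) \le V(0)$. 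Translating $V$ and $W$ back to $h^r_{\hat{\theta}}(x)$ against the threshold $\tfrac{1}{2\gamma}\tilde{\vartheta}^\top\tilde{\vartheta}$ then bounds the distance from the closed-loop state to $\mathcal{C}^r_{\hat{\theta}}$ by a decaying function of its initial value, i.e.\ asymptotic stability of $\mathcal{C}^r_{\hat{\theta}}$.

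The main obstacle is the bookkeeping that ties these estimates to the \emph{nominal} set $\mathcal{C}^r_{\hat{\theta}}$: the certificate $W$ is built from the true error $\tilde{\theta}$, whereas $\mathcal{C}^r_{\hat{\theta}}$ is cut out by the worst-case constant $\tilde{\vartheta}$, so the $0$-superlevel set of $W$ only \emph{contains} $\mathcal{C}^r_{\hat{\theta}}$ rather than coinciding with it. One must therefore use the pointwise bound $h^r_{\hat{\theta}}(x) \ge W$ and monotonicity of $\alpha$, together with a \cref{lemma:tight}-type observation to extend the validity of the RaCBF inequality \cref{eq:racbfu} from $\mathcal{C}^r_{\hat{\theta}}$ to a neighborhood so the comparison argument is legitimate near $\partial\mathcal{C}^r_{\hat{\theta}}$, to phrase the final stability and attractivity estimates for $\mathcal{C}^r_{\hat{\theta}}$ itself and not merely for the larger, error-dependent level set. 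The vector-field cancellations and the comparison-lemma step are routine; it is this last translation that needs care.
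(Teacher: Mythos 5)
Your certificate $W = h^r_{\hat{\theta}}(x) - \tfrac{1}{2\gamma}\tilde{\theta}^\top\tilde{\theta}$ and the cancellation giving $\dot{W} = \nabla_x h^r_{\hat{\theta}}(x)^\top\left[f(x)-\Delta(x)^\top\Lambda_{\hat{\theta}}(x)+g(x)u\right] \geq -\alpha\bigl(h^r_{\hat{\theta}}(x)-\tfrac{1}{2\gamma}\tilde{\vartheta}^\top\tilde{\vartheta}\bigr)$ are correct and match the paper's bookkeeping (the paper's Lyapunov-like function is $V_c = \tfrac{1}{2\gamma}\tilde{\vartheta}^\top\tilde{\vartheta} - W$, the same object up to sign and a constant). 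The gap is in the very next step, where you weaken the right-hand side to $-\alpha(W)$ via $h^r_{\hat{\theta}}(x)-\tfrac{1}{2\gamma}\tilde{\vartheta}^\top\tilde{\vartheta} \leq W$ and monotonicity of $\alpha$. That inequality discards exactly the quantity the proposition is about: the comparison argument on $\dot{W}\geq-\alpha(W)$ only drives $W$ up to zero, i.e.\ it shows the state approaches the set $\{h^r_{\hat{\theta}}(x) \geq \tfrac{1}{2\gamma}\tilde{\theta}^\top\tilde{\theta}\}$, which is a strict superset of $\mathcal{C}^r_{\hat{\theta}}$ whenever $\tilde{\theta}^\top\tilde{\theta} < \tilde{\vartheta}^\top\tilde{\vartheta}$, and convergence to a superset does not yield attractivity of $\mathcal{C}^r_{\hat{\theta}}$. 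The repair you sketch for the ``translation'' step goes the wrong way: the pointwise bound $h^r_{\hat{\theta}}(x) \geq W$ lower-bounds $h^r_{\hat{\theta}}$ by $W$, which is what one wants for safety ($h^r_{\hat{\theta}}\geq 0$) but can never force $h^r_{\hat{\theta}}$ up to the worst-case threshold $\tfrac{1}{2\gamma}\tilde{\vartheta}^\top\tilde{\vartheta}$; likewise your candidate $V=\max\{0,-W\}$ vanishes on the larger set, so $\dot{V}\leq-\tilde{\alpha}(V)$ is a stability estimate for that larger set, not for $\mathcal{C}^r_{\hat{\theta}}$. A \cref{lemma:tight}-type extension does not help either, since the problem is not the domain of validity of \cref{eq:racbfu} but the slack introduced by replacing $\tilde{\vartheta}$ with $\tilde{\theta}(t)$.

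The fix --- and the paper's actual route --- is to not apply the monotonicity bound at all. On the region $h^r_{\hat{\theta}}(x) \leq \tfrac{1}{2\gamma}\tilde{\vartheta}^\top\tilde{\vartheta}$ keep $\dot{W} \geq -\alpha\bigl(h^r_{\hat{\theta}}(x)-\tfrac{1}{2\gamma}\tilde{\vartheta}^\top\tilde{\vartheta}\bigr) = \alpha\bigl(\tfrac{1}{2\gamma}\tilde{\vartheta}^\top\tilde{\vartheta}-h^r_{\hat{\theta}}(x)\bigr) \geq 0$ (the paper writes this as $\dot{V}_c \leq -\alpha(\cdot) \leq 0$, implicitly using oddness of $\alpha$). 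Since $W$ is then nondecreasing and bounded above, the integral $\int_0^\infty \alpha\bigl(\tfrac{1}{2\gamma}\tilde{\vartheta}^\top\tilde{\vartheta}-h^r_{\hat{\theta}}(x(\tau))\bigr)\,d\tau$ is finite, and Barbalat's lemma together with $\alpha(r)=0 \iff r=0$ gives $h^r_{\hat{\theta}}(x(t)) \to \tfrac{1}{2\gamma}\tilde{\vartheta}^\top\tilde{\vartheta}$, i.e.\ convergence to $\partial\mathcal{C}^r_{\hat{\theta}}$ itself. Your comparison machinery is not wrong in itself, but it must be run against the deficit $\tfrac{1}{2\gamma}\tilde{\vartheta}^\top\tilde{\vartheta}-h^r_{\hat{\theta}}(x)$ directly, not against $-W$.
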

\begin{proof}
Consider the Lyapunov-like function
\begin{equation*}
    V_c(t) = \left( \tfrac{1}{2\gamma} \tilde{\vartheta}^\top \tilde{\vartheta} - h^r_{\hat{\theta}}(x) \right) + \tfrac{1}{2\gamma} \tilde{\theta}^\top \tilde{\theta},
\end{equation*}
where $h^r_{\hat{\theta}}(x) \leq \tfrac{1}{2\gamma} \tilde{\vartheta}^\top \tilde{\vartheta}$.
Differentiating and using \cref{eq:racbfu,eq:radapt} yields $\dot{V}_c \leq -\alpha\left(\tfrac{1}{2\gamma} \tilde{\vartheta}^\top \tilde{\vartheta}-h^r_{\hat{\theta}}(x)\right) \leq 0$ so $h^r_{\hat{\theta}}(x)$ and $\tilde{\theta}$ are bounded. 
If $h^r_{\hat{\theta}}(x)$ and $\alpha(\cdot)$ are continuously differentiable functions then $\alpha\left(\tfrac{1}{2\gamma} \tilde{\vartheta}^\top \tilde{\vartheta}-h^r_{\hat{\theta}}(x)\right)$ is uniformly continuous.
Integrating $\dot{V}_c(t)$ yields $\int \limits_0^\infty \alpha\left(\tfrac{1}{2\gamma} \tilde{\vartheta}^\top \tilde{\vartheta}-h^r_{\hat{\theta}}(x(\tau)\right)  d\tau \leq V_c(0) < \infty$ so by Barbalat's lemma, $\alpha(\cdot) \rightarrow 0$. Since $\alpha(r) = 0 \iff r = 0$ then $h^r_{\hat{\theta}}(x(t) \rightarrow \tfrac{1}{2\gamma} \tilde{\vartheta}^\top \tilde{\vartheta}$.
Hence, the set $C^r_{\hat{\theta}}$ is asymptotically stable.
\end{proof}

\bibliographystyle{ieeetr}
\bibliography{ref}

\end{document}